\def\@maketitle{%
  \newpage
  \null
  \let \footnote \thanks
    {\normalfont\sffamily\bfseries\Large\noindent\@title \par}%
%    {\normalfont\sffamily\bf\Large\noindent\@title \par}
    \vskip 1em%
    {\normalfont\sffamily %\large
        \noindent
        \@author
        \par}
  \par
  \vskip 4em}
\def\@seccntformat#1{\csname the#1\endcsname{.\ }}
\renewcommand\section{\@startsection {section}{1}{\z@}%
                                   {-3.0ex \@plus -1ex \@minus -.2ex}%
                                   {1.5ex \@plus.2ex}%
                                   {\normalfont\large\bfseries}}
\renewcommand\subsection{\@startsection{subsection}{2}{\z@}%
                                     {-2.75ex\@plus -1ex \@minus -.2ex}%
                                     {1.5ex \@plus .2ex}%
                                   {\normalfont\large}}
\def\fnum@figure{\normalfont\footnotesize\figurename~\thefigure}
\renewcommand\tableofcontents{%
    \section*{\contentsname
        \@mkboth{%
           \MakeUppercase\contentsname}{\MakeUppercase\contentsname}}%
    \@starttoc{toc}%
    }
\renewcommand*\l@part[2]{%
  \ifnum \c@tocdepth >-2\relax
    \addpenalty\@secpenalty
    \addvspace{2.25em \@plus\p@}%
    \begingroup
      \setlength\@tempdima{3em}%
      \parindent \z@ \rightskip \@pnumwidth
      \parfillskip -\@pnumwidth
      {\leavevmode
       \large \bfseries #1\hfil \hb@xt@\@pnumwidth{\hss #2}}\par
       \nobreak
       \if@compatibility
         \global\@nobreaktrue
         \everypar{\global\@nobreakfalse\everypar{}}%
      \fi
    \endgroup
  \fi}
\renewcommand*\l@section[2]{%
  \ifnum \c@tocdepth >\z@
    \addpenalty\@secpenalty
    \addvspace{1.0em \@plus\p@}%
    \setlength\@tempdima{1.5em}%
    \begingroup
      \parindent \z@ \rightskip \@pnumwidth
      \parfillskip -\@pnumwidth
      \leavevmode \sffamily\bfseries
      \advance\leftskip\@tempdima
      \hskip -\leftskip
      #1\nobreak\hfil \nobreak\hb@xt@\@pnumwidth{\hss #2}\par
    \endgroup
  \fi}
\renewcommand*\l@subsection{\sffamily\@dottedtocline{2}{1.5em}{2.3em}}
\renewcommand*\l@subsubsection{\@dottedtocline{3}{3.8em}{3.2em}}
\renewcommand*\l@paragraph{\@dottedtocline{4}{7.0em}{4.1em}}
\renewcommand*\l@subparagraph{\@dottedtocline{5}{10em}{5em}}
\renewcommand{\theequation}{\thesection.\arabic{equation}}
\theoremstyle{plain}
\newtheorem{theorem}[equation]{Theorem}
\newtheorem{corollary}[equation]{Corollary}
\newtheorem{lemma}[equation]{Lemma}
\newtheorem{proposition}[equation]{Proposition}
\newtheorem*{theorem*}{}
\theoremstyle{definition}
\newtheorem{definition}[equation]{Definition}
\newenvironment{subequations*}{% same thing but without incrementing
			       % the equation number
  \begingroup % conservative approach
  \let\protect\@nx
  \edef\@tempa{\def\@nx\theparentequation{\theequation}}%
  \@xp\endgroup\@tempa
  \setcounter{parentequation}{\value{equation}}%
  \setcounter{equation}{0}%
  \def\theequation{\theparentequation\alph{equation}}%
  \ignorespaces
}{%
  \setcounter{equation}{\value{parentequation}}%
  \global\@ignoretrue
}
\renewcommand\det{{\rm det\,}}
\def\d/{/\mspace{-6.0mu}/}
\renewcommand\section{\@startsection{section}{1}{\z@}%
                                   {-3.0ex \@plus -1ex \@minus -.2ex}%
                                   {1.5ex \@plus.2ex}%
                                   {\normalfont\sffamily\large\bfseries}}
\renewcommand\subsection{\@startsection{subsection}{2}{\z@}%
                                     {-2.75ex\@plus -1ex \@minus -.2ex}%
                                     {1.5ex \@plus .2ex}%
                                   {\normalfont\sffamily\large}}
\renewcommand\subsubsection{\@startsection{subsubsection}{3}{\z@}%
                                     {-2.75ex\@plus -1ex \@minus -.2ex}%
                                     {1.5ex \@plus .2ex}%
                                   {\normalfont\sffamily\large}}
\newcommand{\od}{\stackrel{\mbox {\tiny {def}}}{=}}
\def\RR{\mathbb{R}}
\def\d{\mathrm{d}}
\def\RR{\mathbb{R}}
\def\RR{\mathbb{R}}
\def\det{\operatorname{det}}
\def\max{\mathrm{max}}
\def\supp{\operatorname{supp}}
\def\od{\stackrel{\mathrm{def}}{=}}
\def\supp{\operatorname{supp}}
\def\FP{\operatorname{FP}}
\def\Wtil{\widetilde{W}}
\definecolor{cherry}{rgb}{0.9,.1,.2}
\begin{document}

\noindent {\Large \bf Robust motifs of threshold-linear networks}\\
Carina Curto$^1$, Christopher Langdon$^1$, Katherine Morrison$^2$
\bigskip

\noindent {\small $^1$ Department of Mathematics, The Pennsylvania State University, University Park, PA 16802\\
$^2$ School of Mathematical Sciences, University of Northern Colorado, Greeley, CO 80639}
\medskip 

\vspace{.25in}

\noindent {\bf Abstract.} To any inhibition-dominated threshold-linear network (TLN) we can associate a directed graph that captures the pattern of strong and weak inhibition between neurons. Robust motifs are graphs for which the structure of fixed points in the network is independent of the choice of connectivity matrix $W$, and whose dynamics are thus greatly constrained. This makes them ideal building blocks for constructing larger networks whose behaviors are robust to changes in connection strengths. In contrast, flexible motifs correspond to networks with multiple dynamic regimes. In this work, we give a purely graphical characterization of both flexible and robust motifs of any size. We find that all but a few robust motifs fall into two infinite families of graphs with simple feedforward architectures.

\tableofcontents

\section{Introduction}\label{sec:intro}

Network dynamics are determined by a variety of factors including intrinsic dynamics of single neurons, neuromodulators, and the structure of network connectivity. Here we focus on the role of connectivity, which can be broken down into two components: network architecture and synaptic weights. The architecture is typically represented by a directed graph $G$, while synaptic weights can be thought of as continuous parameters given as entries of a real-valued weight matrix $W$. The architecture, however, imposes strong constraints on the weights. By considering all $W$ compatible with a graph $G$, we obtain the range of dynamic possibilities that can arise from $G$. It is thus natural to ask: how does the architecture of a network constrain the allowed dynamics? Are some architectures fundamentally more flexible -- or constraining -- than others?

In this work we address these questions in the context of threshold-linear networks (TLNs), whose dynamics are given by
\begin{equation}\label{eq:dynamics}
\dfrac{dx_i}{dt} = -x_i + \left[\sum_{j=1}^n W_{ij}x_j+\theta \right]_+, \quad i = 1,\ldots,n.
\end{equation}
Here $n$ is the number of neurons, $x_i(t) \in \RR_{\geq 0}$ is the firing rate of neuron $i$, and $\theta >0$ is a constant external input (the same for each neuron).  The values $W_{ij}$ are entries of an $n \times n$ matrix of real-valued connection strengths, and the threshold-nonlinearity is given by $[\cdot]_+ = \max\{0,\cdot\}$. 
We will also use the standard notation $[n] \od \{1,\ldots,n\}$ to denote the set of all neurons (or nodes) of the network.

TLNs exhibit the full repertoire of nonlinear dynamic behavior, including multistability, limit cycles, quasiperiodic attractors, and chaos \cite{CTLN-preprint}. Much of this behavior is shaped by the set of stable and unstable fixed points \cite{book-chapter, CTLN-paper}. Because the system~\eqref{eq:dynamics} is piecewise linear, for generic $W$ there can be at most one fixed point for each linear region of the state space. In particular, there is at most one fixed point per possible support, where the {\it support} of a fixed point is the subset of neurons with positive firing rate \cite{fp-paper}.  A fixed point where all neurons fire is called a {\it full-support} fixed point, and has support $[n]$. We denote the set of all fixed point supports as
$$\FP(W) \od \{\sigma \subseteq [n] \mid \sigma \text{ is the support of a fixed point}\}.$$
Once the allowed supports $\FP(W)$ are known, the corresponding fixed points are easily recovered \cite{fp-paper}.
Note that the supports depend only on $W$: the choice of $\theta$ simply rescales the values of the fixed points, while keeping the supports intact. Prior work has focused on the set of {\it stable} fixed points of~\eqref{eq:dynamics} in the case of {\it symmetric} $W$ \cite{Seung-Nature, XieHahnSeung, HahnSeungSlotine, flex-memory, net-encoding, pattern-completion}, while more recent work has highlighted the importance of {\it unstable} fixed points in shaping the dynamics of these networks \cite{book-chapter, CTLN-paper}. In particular, new mathematical tools have been developed to investigate the relationship between the full set $\FP(W)$ and the structure of the connectivity matrix $W$ \cite{fp-paper}.

In this article, as in \cite{fp-paper}, we restrict ourselves to {\it competitive} (i.e., inhibition-dominated) networks, so that $W$ is effectively inhibitory with $W_{ij} < 0$ for each $i\neq j,$ and $W_{ii}=0$.  To each such network, we associate a directed connectivity graph $G_W$ as follows:
\vspace{-.03in}
\begin{equation}\label{eq:TLN-graph}
j \to i \;\text{ in }\; G_W \, \,  \Leftrightarrow \, \,  W_{ij} >-1. \quad \quad 
\end{equation}
To motivate the definition of $G_W$, note that there is an implicit ``leak'' time constant in the $-x_i$ term of~\eqref{eq:dynamics}, which has been set to $1$. An edge $j \to i$, corresponding to $W_{ij}>-1$, thus indicates that $j$ inhibits $i$ less than $j$ inhibits itself through its self-leak term. On the other hand, the absence of an edge, corresponding to $W_{ij}<-1$, signifies that the inhibition from $j$ to $i$ is stronger than the leak term.   For this reason, even though all interactions are effectively inhibitory, the activity of a network often appears to follow the edges of the graph. 

For a given architecture $G$, only a subset of competitive TLNs will have $W$ satisfying $G_W = G$. What do they have in common? In other words, how does the architecture constrain the network dynamics? To address this question, we will use the set of fixed point supports $\FP(W)$ as a proxy for qualitative features of the dynamics. This enables us to define different types of graph motifs. First, we consider whether or not the full-support fixed point is always present or always absent for a given $G$. This allows us to partition directed graphs into three categories: invariant permitted, invariant forbidden, and flexible motifs. In the following definition, and elsewhere in this article, the phrase ``all $W$ such that $G_W = G\,$'' refers to $W$ within the family of TLNs that are both competitive and {\it nondegenerate} (see Section~\ref{sec:background} for precise details).

\begin{definition}
Let $G$ be a directed graph. We say that $G$ is an {\it invariant permitted motif} if $[n] \in \FP(W)$ for all $W$ such that $G_W = G$. If $[n] \not\in \FP(W)$ for all $W$ such that $G_W = G$, then $G$ is an {\it invariant forbidden motif}. Finally, if there exist $W_1, W_2$ with graph $G$ such that $[n] \in \FP(W_1)$ and $[n] \notin \FP(W_2)$, then $G$ is a {\it flexible motif}. 
\end{definition}

\begin{figure}[!h]
\begin{center}
\includegraphics[width=\textwidth]{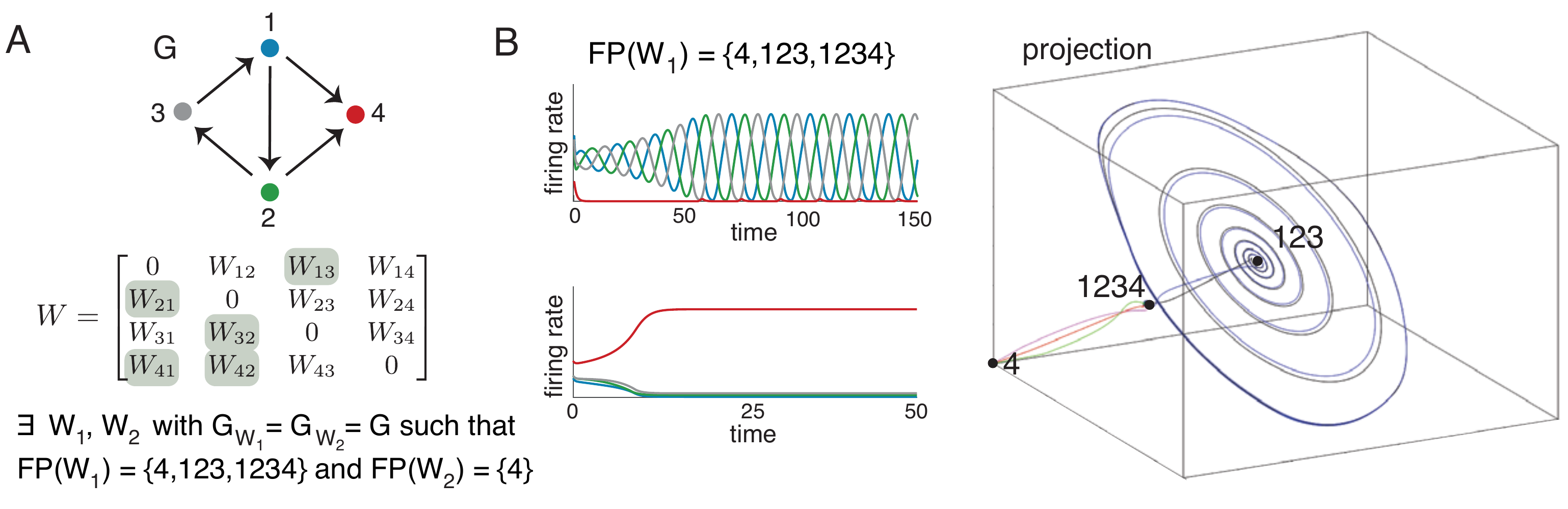}
\caption{{A flexible motif.} (A) The graph $G$ determines which off-diagonal $W$ entries are above or below $-1$. The entries above $-1$, corresponding to edges in the graph, are shaded in the matrix below. Two example matrices, $W_1$ and $W_2$, have the same graph $G$ but different sets of fixed point supports. In $W_1$, all unshaded entries were set to $-1.25,$ and the shaded entries are $W_{21}=W_{32}=W_{13}=-.81, W_{41}=W_{42}=-.999$. In $W_2$, the values of the edges coming into node $4$ have been increased to $W_{41}=W_{42}=-.9$, and all other entries are the same as those of $W_1$. This change is enough to eliminate the fixed points supported on $123$ and $1234$, leaving only the stable fixed point supported on $4$. (B) In the case of $W_1$, the network has two attractors. The first is a limit cycle corresponding to the unstable fixed point for $123$ (top), and the second is the stable fixed point for $4$ (bottom). The fixed point for $1234$ corresponds to a ``tipping point'': initial conditions near this point will converge to one attractor or the other. (C) A projection of several four-dimensional trajectories onto the $x_4 = 0$ subspace. Five initial conditions were chosen near the tipping point $1234$. Three of these trajectories (red, green, violet) evolved towards the stable fixed point $4$, while the other two (black, blue) initially approached the unstable fixed point $123$ before falling into the corresponding limit cycle attractor.}
\label{fig:flexible-motif}
\end{center}
\vspace{-.2in}
\end{figure}

The graph in Figure~\ref{fig:flexible-motif} is an example of a flexible motif. As shown in panel A, the associated matrices are constrained to have $W_{ij} > -1$ for the shaded entries, while the remaining off-diagonal entries all have  
$W_{ij} < -1$. For some choices of $W$, there are three fixed points with supports $\FP(W) = \{4, 123, 1234\}$, while other choices yield a single fixed point supported on the sink node $4$. In the first dynamic regime, the first two fixed points correspond to attractors (see Figure~\ref{fig:flexible-motif}B), while the full-support fixed point $1234$ serves as a kind of ``tipping point." Indeed, Figure~\ref{fig:flexible-motif}C depicts trajectories of the corresponding TLN in a projection of the $4$-dimensional state space. The three fixed points are labeled by their supports, and different colors correspond to different trajectories. Five distinct initial conditions were selected near the $1234$ fixed point: three converged to the stable fixed point $4$ while two initially approached the unstable fixed point $123$ before spiraling out and falling into the limit cycle (the same one shown at the top of panel B). This nicely illustrates the tipping point behavior of the full-support fixed point. In the second dynamic regime, where $\FP(W) = \{4\},$ all trajectories converge to a single stable fixed point. 

Note that every graph $G$ is either invariant permitted, invariant forbidden, or flexible. 
Surprisingly, there is a simple graph-based characterization for each of these families. This is given in Theorem~\ref{thm:flexible-motifs}, below. Here we need the notions of source and target of a graph, as previously defined in \cite{fp-paper}. A {\it source} is a node with no incoming edges, while a {\it target} is a node with incoming edges from all other nodes. The {\it singleton} is the graph with a single node and no edges. A {\it clique} is a complete graph with all bidirectional edges, and an {\it independent set} is a graph with no edges.

\begin{theorem}\label{thm:flexible-motifs}
Let $G$ be a graph on $n$ nodes. Then
\begin{enumerate}
\item $G$ is a flexible motif if and only if $n \geq 3$, $G$ does not contain both a source and a target, and $G$ is not the $3$-cycle.
\item $G$ is an invariant forbidden motif if and only if $G$ contains both a source and a target, and $G$ is not the singleton.
\item $G$ is an invariant permitted motif if and only if $G$ is either the singleton, the $2$-clique, the independent set of size $2$, or the $3$-cycle.
\end{enumerate}
\end{theorem}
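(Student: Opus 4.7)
The three cases partition all directed graphs (mutually exclusive and jointly exhaustive by the definitions), so it suffices to prove the ``if'' direction in each of (2) and (3); part (1) then follows from the necessity (``only if'') directions of (2) and (3) by elimination, for which the $W_1$ and $W_2$ constructions sketched below are needed.

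The cleanest step is (2). Suppose $G$ has both a source $s$ and a target $t$; since $G$ is not the singleton, $n\geq 2$, which forces $s\neq t$. Assume for contradiction that $[n]\in\FP(W)$ for some $W$ with $G_W=G$: then all $x_i>0$ and $(I-W)x=\theta\mathbf{1}$. Reading the source row, $x_s=\theta+\sum_{j\neq s}W_{sj}x_j$ with every $W_{sj}<-1$ gives $x_s<\theta-\sum_{j\neq s}x_j$, so $\sum_i x_i<\theta$. The target row with every $W_{tj}>-1$ gives the reverse inequality $\sum_i x_i>\theta$. Contradiction.

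For (3), I would verify each of the four graphs directly. The singleton and the two graphs on $n=2$ reduce to direct $1\times 1$ and $2\times 2$ Cramer computations in which positivity is immediate from the off-diagonal sign constraints. The essential case is the 3-cycle $1\to 2\to 3\to 1$. Setting $a_{ij}=-W_{ij}$, the constraints are $a_{12},a_{23},a_{31}>1$ and $a_{21},a_{32},a_{13}<1$. I would expand $\det(I-W)$ and the three Cramer numerators $\det M_i$ (column $i$ of $I-W$ replaced by $\mathbf{1}$), and then perform a sign analysis using groupings such as
\[
\det M_1=(1-a_{23}a_{32})+a_{12}(a_{23}-1)-a_{13}(1-a_{32}),
\]
together with its cyclic analogues, to show that all four quantities share a common sign for every admissible $W$. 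The cyclic symmetry of the 3-cycle reduces three numerator checks to one.

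For (1), the trichotomy reduces to two constructions: for every eligible $G$, exhibit $W_1$ with $[n]\in\FP(W_1)$ and $W_2$ with $[n]\notin\FP(W_2)$, both compatible with $G$. Constructing $W_1$ proceeds by perturbing the degenerate matrix $W^0_{ij}=-1$ (for $i\neq j$) into the allowed region for $G$, tuning the perturbation entry-by-entry so that the solution of $(I-W)x=\theta\mathbf{1}$---close to $(\theta/n)\mathbf{1}$ at leading order---remains coordinatewise positive. Naively uniform perturbations can fail on graphs with more edges than the 3-cycle, so the choice of perturbation signs and magnitudes must respect the edge structure. The main obstacle is the uniform construction of $W_2$. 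My plan is a parameter-sweep argument: starting from any compatible $W$, continuously vary one strategic entry $W_{ij}$ within its allowed interval (without crossing the threshold $-1$) so that some Cramer numerator $\det M_k$ is driven through zero by an intermediate-value argument; a small perturbation off the crossing then yields $W_2$ with $x_k<0$. Executing this uniformly requires a case split on whether $G$ has a source but no target, a target but no source, or neither, and in each case identifying the correct entry to sweep. I expect this step to draw on the sign-condition characterization of $\FP(W)$ from \cite{fp-paper}.
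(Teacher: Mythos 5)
Your logical framing is fine, and your proof of the ``if'' direction of part (2) is correct and complete: for a full-support fixed point, $(I-W)x=\theta\mathbf{1}$, and comparing the source row (all $W_{sj}<-1$, giving $\sum_i x_i<\theta$) with the target row (all $W_{tj}>-1$, giving $\sum_i x_i>\theta$) is a contradiction. This is actually more elementary than the paper's route, which derives the same statement from a strong-domination lemma. Your plan for the $3$-cycle can also be completed: from $a_{12}>1$ and $a_{13}<1$ one gets $\det M_1>(1-a_{23}a_{32})+(a_{23}-1)-(1-a_{32})=(a_{23}-1)(1-a_{32})>0$, and a similar chain gives $\det(I-W)>0$; note, though, that the paper avoids all of this with a parity argument (the $3$-cycle has no sinks and every $2$-subset is a directed edge, hence forbidden, so the odd count $|\FP(W)|$ forces $\FP(W)=\{123\}$).

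The genuine gap is in the ``only if'' directions, i.e.\ exactly the constructions your part (1) needs, and your sketches there do not yet work. For $W_1$: perturbing $W^0_{ij}=-1$ is the right family, but your claim that the solution of $(I-W)x=\theta\mathbf{1}$ is close to $(\theta/n)\mathbf{1}$ at leading order is false, because $11^T$ is singular; for small $A$ with $I-W=11^T+A$ the solution depends on the ratios of the perturbation entries and can blow up or leave the positive orthant (already for $n=2$, $x_1\approx\theta a/(a+b)$). The missing device is the paper's: choose $A$ with zero diagonal, signs dictated by $G$, and \emph{all row sums equal} to a common $r$ (possible with $r>0$ when $G$ has no target, $r<0$ when it has no source), so that $x=\tfrac{\theta}{n+r}\mathbf{1}$ is an exact, manifestly positive fixed point. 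For $W_2$: your single-entry intermediate-value sweep is only a hope; a Cramer numerator is affine in the swept entry, and you give no argument that for every graph with $n\geq 3$ other than the $3$-cycle some entry's admissible range forces a sign change, nor does your plan explain why it necessarily fails for the $3$-cycle, a sanity check it must pass. The paper's mechanism is different and is really the heart of the theorem: complement all outgoing edges of a single node $i$ to form $\widehat{G}(i)$, build the uniform fixed point for $\widehat{G}(i)$ as above, then flip the sign of column $i$ of $A$ and of the $i$th coordinate of the solution to obtain a $W$ compatible with $G$ whose unique solution of $(I-W)x=\theta\mathbf{1}$ has a negative entry; a separate combinatorial lemma (each node is a target in exactly one $\widehat{G}(i)$ and a source in exactly one, forcing $n-2\leq d_{\mathrm{in}}\leq 1$, hence $n\leq 3$, then a check of the two candidate $n=3$ graphs) shows this applies to every graph except those of size at most $2$ and the $3$-cycle. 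Without a replacement for these two constructions, parts (1) and the necessity halves of (2)--(3) remain unproven.
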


Flexible motifs are guaranteed to have multiple dynamic regimes. At the other extreme, there are graphs that completely determine $\FP(W)$. We refer to these as {\it robust motifs}, because the fixed point structure is independent of the choice of $W$ and therefore robust to arbitrary perturbations of the weights that fix $G_W$, provided the TLN stays competitive and nondegenerate.

\begin{definition}
Let $G$ be a directed graph on $n$ nodes. We say that $G$ is a \emph{robust motif} if $\FP(W)$ is identical for all $W$ such that $G_W = G$. 
\end{definition}

\noindent Robust motifs cannot be flexible: they are either invariant permitted or invariant forbidden.
For ease of comparison, all four motif definitions are summarized in the table below.

\begin{table}[H]
\begin{center}
\begin{tabular}{l|l}
type of motif $G$ & definition\\
\hline
\hline
flexible & $\exists\; W_1, W_2$ with $G_{W_1} = G_{W_2} = G$ s.t. $[n] \in \FP(W_1)$ and $[n] \notin \FP(W_2)$\\
\hline
invariant permitted & $[n] \in \FP(W)$ for all $W$ with $G_W = G$\\
\hline
invariant forbidden & $[n] \notin \FP(W)$ for all $W$ with $G_W = G$\\
\hline
robust & $\FP(W)$ is identical for all $W$ with $G_W = G$\\
\hline
\hline
\end{tabular}
\end{center}
\end{table}
\vspace{-.15in}

At first glance, it is not obvious that robust motifs should even exist. Moreover, while it is relatively easy to conclude that a motif is {\it not} robust (e.g., by exhibiting two different $W$ matrices that give rise to distinct $\FP(W)$), it is not so straightforward to prove that a given motif {\it is} robust. A brute force computation sampling all $W$ for a given graph $G$ is not possible. 

Despite the apparent difficulties, building on previous results about fixed points of TLNs we are able to provide a complete characterization of robust motifs for all $n$. We find that the bulk of robust motifs fall into two infinite families, with only a handful of (very small) exceptions. The two families are obtained via small modifications of directed acyclic graphs (DAGs), which are graphs that have a feedforward structure.\footnote{Specifically, $G$ is a DAG if and only if there exists an ordering of the vertices such that if $j > i$, then $j \not\to i$.}  We call the families DAG1 and DAG2, and they are defined as follows:
\begin{itemize}
\item DAG1: The graphs in this family consist of all DAGs of size $n>1$ that contain a target node. These can always be decomposed as a DAG of size $n-1$ plus a target node $t$, with no edges from the target back to the DAG. 
\item DAG2: This family consists of all graphs that can be decomposed as a (nonempty) DAG plus a $2$-clique, where at least one of the nodes in the $2$-clique is a target of the full graph and there are no edges from the $2$-clique back to the DAG. 
\end{itemize}
These families are illustrated in Figure~\ref{fig:venn-diagram}B (right). 
Note that all graphs in DAG1 and DAG2 have a source node and a target node, and are thus invariant forbidden motifs by Theorem~\ref{thm:flexible-motifs}. They can thus be thought of as generalizations of the single directed edge graph (see Figure~\ref{fig:venn-diagram}B, bottom left), which is the smallest element of DAG1. Any graph that is a longer path, however, is a DAG that does {\it not} belong to DAG1 because it has no target. On the other hand, all oriented directed cliques (i.e. complete DAGs, having a directed edge between every pair of nodes) belong to the DAG1 family. Directed cliques have previously been found to be overrepresented and dynamically meaningful in large, biologically realistic models of cortical networks \cite{ReimannHessMarkram17}, as well as in experimental and computational studies of mammalian cortex \cite{ChambersMacLean16, DecheryMacLean18, BojanekZhuMacLean19}.

It turns out that DAG1 and DAG2 comprise all robust motifs that are invariant forbidden; the rest are invariant permitted. The following theorem is our main result.

\begin{theorem}\label{thm:robust-motifs}
$G$ is a robust motif if and only if one of the following holds:
\begin{itemize}
\item[(i)] $G$ belongs to one of the infinite families DAG1 or DAG2, or
\item[(ii)] $G$ is one of the four invariant permitted motifs (the singleton, the $2$-clique, the independent set of size $2$, or the $3$-cycle).
\end{itemize}
\end{theorem}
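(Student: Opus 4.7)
The plan is to use Theorem~\ref{thm:flexible-motifs} as the main structural wedge and reduce the result to two short verifications plus one genuine structural argument. The first observation is that a robust motif cannot be flexible, since a flexible motif by definition exhibits two matrices $W_1, W_2$ with $\FP(W_1) \neq \FP(W_2)$ (distinguished precisely by whether $[n]$ lies in the support set). So every robust motif is either invariant permitted or invariant forbidden, and by Theorem~\ref{thm:flexible-motifs}(2)--(3) we know exactly which graphs are of each type. It therefore suffices to prove (A) each of the four invariant permitted motifs has $\FP(W)$ independent of $W$, and (B) among invariant forbidden motifs, robustness is equivalent to membership in DAG1 $\cup$ DAG2.

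Part (A) is a finite check. The singleton is trivial. For the $2$-clique, the independent set of size $2$, and the $3$-cycle I would apply the sign/determinant characterization of fixed point supports from \cite{fp-paper} to each candidate subset $\sigma \subseteq [n]$, and observe that in these small graphs the edge constraints $W_{ij} > -1$ and non-edge constraints $W_{ij} < -1$ already uniquely determine $\FP(W)$. For the $3$-cycle, in particular, every proper subset $\sigma \subsetneq [3]$ is ruled out as a support by the asymmetric pattern of strong and weak inhibition, leaving $\FP(W) = \{[3]\}$ uniformly.

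For the ``if'' direction of (B), I would invoke the result from \cite{fp-paper} that the fixed point supports of a DAG are exactly its singleton sinks. A graph $G \in$ DAG1 is a DAG whose target $t$ is its unique sink (every other node has an edge into $t$, hence is not a sink), giving $\FP(W) = \{\{t\}\}$ for every compatible $W$. For $G \in$ DAG2 the attached $2$-clique $\{a,b\}$ plays the role of a single sink component; using that at least one of $a, b$ is a target of $G$ and that no edges exit the clique, a parallel argument yields $\FP(W) = \{\{a,b\}\}$ uniformly.

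The main obstacle is the converse in (B): every invariant forbidden motif $G$ outside DAG1 $\cup$ DAG2 must be shown non-robust. By Theorem~\ref{thm:flexible-motifs}(2) such a $G$ contains both a source and a target $t$. My approach is to analyze the induced subgraph $H = G|_{[n] \setminus \{t\}}$ together with the pattern of edges into $t$. Failure of membership in DAG1 $\cup$ DAG2 forces either $H$ not to be a DAG, or $H$ to be a DAG but with no neighbor of $t$ forming a $2$-clique with $t$ whose removal leaves a DAG. In each subcase I would produce two matrices compatible with $G$ whose fixed point sets differ, via two kinds of moves: rescaling strong-inhibition entries along an induced cycle in $H$ to switch a fixed point support on or off (using the perturbation and survival principles of \cite{fp-paper}), and exploiting a reciprocal edge not arranged as a clean sink clique to toggle a higher-support fixed point. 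The challenge will be organizing these local moves into a uniform argument covering every $G$ outside DAG1 $\cup$ DAG2 rather than handling each subfamily separately; I expect this to require a careful case analysis organized around the predecessor structure of $t$.
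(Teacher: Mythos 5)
Your overall frame (a robust motif cannot be flexible, so split into the invariant permitted and invariant forbidden cases via Theorem~\ref{thm:flexible-motifs}) matches the paper, but two of your three steps have genuine gaps. First, your ``if'' direction for DAG1/DAG2 rests on a lemma that is false for general competitive TLNs: it is not true that the fixed point supports of a DAG are exactly its singleton sinks. The independent set is a DAG whose $\FP(W)$ always contains the full support, and---more to the point---the path $1\to 2\to 3$ is a DAG with a \emph{unique} sink for which $\FP(W)$ may or may not contain $123$ depending on $W$ (Figure~\ref{fig:n3-graphs}C), so sink-uniqueness is not what makes DAG1 work; the target property is, and that requires an argument. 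Moreover, DAG2 graphs are not DAGs at all (they contain a $2$-cycle), so your ``parallel argument'' is exactly the missing content. The paper supplies this content with the collapse theorem (Theorem~\ref{thm:collapse}): for any source-target decomposition $\omega\:\dot\cup\:\tau$, one has $\FP(W)=\FP(W_\tau)$, proved by iteratively stripping sources (Lemma~\ref{lemma:remove-sources}) using strong domination (Proposition~\ref{prop:target-source}); applied with $\tau=\{t\}$ or $\tau=\{t,u\}$ this gives $\FP(W)=\{t\}$ or $\{tu\}$ uniformly. (Also, in your part (A), ruling out all proper subsets of the $3$-cycle only yields $\FP(W)\subseteq\{123\}$; you still need parity, Lemma~\ref{lemma:parity}, to conclude $123\in\FP(W)$.)

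Second, the converse---every invariant forbidden motif outside DAG1~$\cup$~DAG2 fails to be robust---is the heart of the theorem, and you leave it as an unexecuted case analysis (``rescaling entries along an induced cycle,'' ``toggling a higher-support fixed point''), with no mechanism for controlling which \emph{proper} subsets $\sigma$ support fixed points. Note that for these graphs $[n]\notin\FP(W)$ for every $W$, so non-robustness must be exhibited on proper supports, which the full-support constructions behind Theorem~\ref{thm:flexible-motifs} do not directly reach; your dichotomy on $H=G|_{[n]\setminus t}$ also misses failure modes of DAG2 membership (e.g., back-edges from the second clique node, or several reciprocal partners of $t$). The paper avoids any such case analysis: it passes via Theorem~\ref{thm:collapse} to the minimal sourceless $\tau$, observes that robustness of $G$ forces robustness of $G|_\tau$ (since every $W_\tau$ compatible with $G|_\tau$ extends to a $W$ compatible with $G$ and $\FP(W)=\FP(W_\tau)$), and then notes that a sourceless robust motif cannot be invariant forbidden, hence must be an invariant permitted motif containing a target---so the singleton or the $2$-clique---which is precisely membership in DAG1 or DAG2. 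To make your proposal work you would need either to prove something equivalent to the collapse theorem or to actually carry out and verify the two-matrix constructions in every case; as written, neither direction of your part (B) is established.
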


\begin{figure}[!h]
%\vspace{-.05in}
\begin{center}
\includegraphics[width=\textwidth]{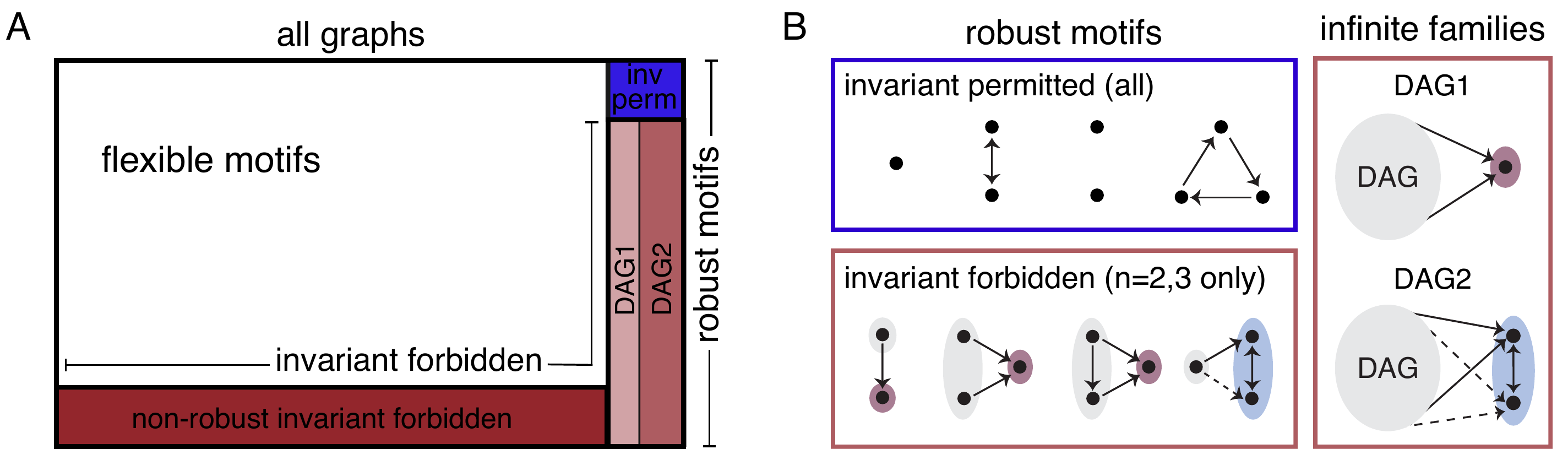}
\vspace{.025in}
\caption{Robust motifs in the zoo of all graph motifs. (A) Schematic of all graphs, and the breakdown into flexible motifs, robust motifs, and invariant permitted and forbidden motifs. (B) Summary of robust motifs. Only four are invariant permitted (top left), while the remainder fall into two infinite families: DAG1 and DAG2 (right). The DAG1 arrows indicate that all nodes in the DAG have edges onto the target node at the right. The dashed arrows in DAG2 represent optional edges, as only the top right node is required to be a target. The first five invariant forbidden motifs are also shown, color-coded according to the family (bottom left). Note that the last graph picture represents two graphs, one where the optional dashed edge is included and one where it is not.}
\label{fig:venn-diagram}
\end{center}
\vspace{-.1in}
\end{figure}

Figure~\ref{fig:venn-diagram}A illustrates the classification of all graphs into the various motif types. The vast majority of graphs are flexible motifs. Among those that are not, only four are invariant permitted and these are all robust (see Figure~\ref{fig:venn-diagram}B). Among the invariant forbidden motifs, those that belong to DAG1 or DAG2 are robust and the rest are not. The smallest robust motifs in the DAG1 and DAG2 families, for $n \leq 3$, are shown at the bottom of Figure~\ref{fig:venn-diagram}B.

Table~\ref{table:number-robust} displays the numbers of robust motifs for small $n$, as well as separate counts for the DAG1 and DAG2 families. For each $n$, the number of graphs in DAG1 is simply given by the number of DAGs of size $n-1$. To count the graphs in DAG2, observe that from any graph in DAG1 one can create one (or more) distinct graphs in DAG2 by adding an edge from the target node back to a sink of the size $n-1$ DAG obtained by removing the target. For $n \leq 6$, we were able to count non-isomorphic DAG2 graphs in this way. For $n \geq 7$, we can only conclude that there are more DAG2 graphs than DAG1 graphs of matching $n$. Figure~\ref{fig:robust-motifs} illustrates all 13 of the robust motifs for $n=4$.

\vspace{.2in}
\begin{table}[H] 
\begin{center}
\begin{small} 
\begin{tabular}{l|ccccc}%{l l c l c l}
$n$\phantom{llll}& \begin{tabular}{c} $\#$ directed\\ graphs \end{tabular} &  $\#$ DAGs &  \begin{tabular}{c} $\#$ graphs\\ in DAG1 \end{tabular} & \begin{tabular}{c} $\#$ graphs\\ in DAG2 \end{tabular} &   \begin{tabular}{c} $\#$ robust\\ motifs \end{tabular}\\
% \specialrule{.125em}{.2em}{.2em} 
\hline
\hline
1& 1 & 1 & 0 & 0 & 1* \\
% \specialrule{.05em}{.2em}{.2em} 
\hline
2& 3 & 2 & 1 & 0 & 3$^*$ \\
% \specialrule{.05em}{.2em}{.2em} 
\hline
3& 16 & 6 & 2 & 2 & 5$^*$ \\
% \specialrule{.05em}{.2em}{.2em} 
\hline
4& 218  & 31 & 6 & 7 & 13\\
% \specialrule{.05em}{.2em}{.2em} 
\hline
5& 9,608  & 302 & 31 & 40 & 71\\
% \specialrule{.05em}{.2em}{.2em} 
\hline
6& 1,540,944  & 5,984 & 302 & 420 & 722\\
% \specialrule{.05em}{.2em}{.2em} 
\hline
7& 882,033,440  & 243,668 & 5,984 & $>$ 5,984 & $>$11,968\\
% \specialrule{.125em}{.2em}{.2em} 
\hline
\hline
\end{tabular}
\end{small}
\vspace{.2in}
\caption{Numbers of robust motifs for small $n$. For each graph size, the total numbers of directed graphs and DAGs are taken from the On-Line Encyclopedia of Integer Sequences \cite{num-directed-graphs, num-DAGs}. $^*$Note that the 1,3, and 5 counts include the invariant permitted robust motifs: the singleton, the $2$-clique, the independent set of size $2$, and the $3$-cycle.}
\vspace{-.2in}
\label{table:number-robust}
\end{center}
\end{table}

\begin{figure}[!h]
\vspace{-.1in}
\begin{center}
\includegraphics[width=\textwidth]{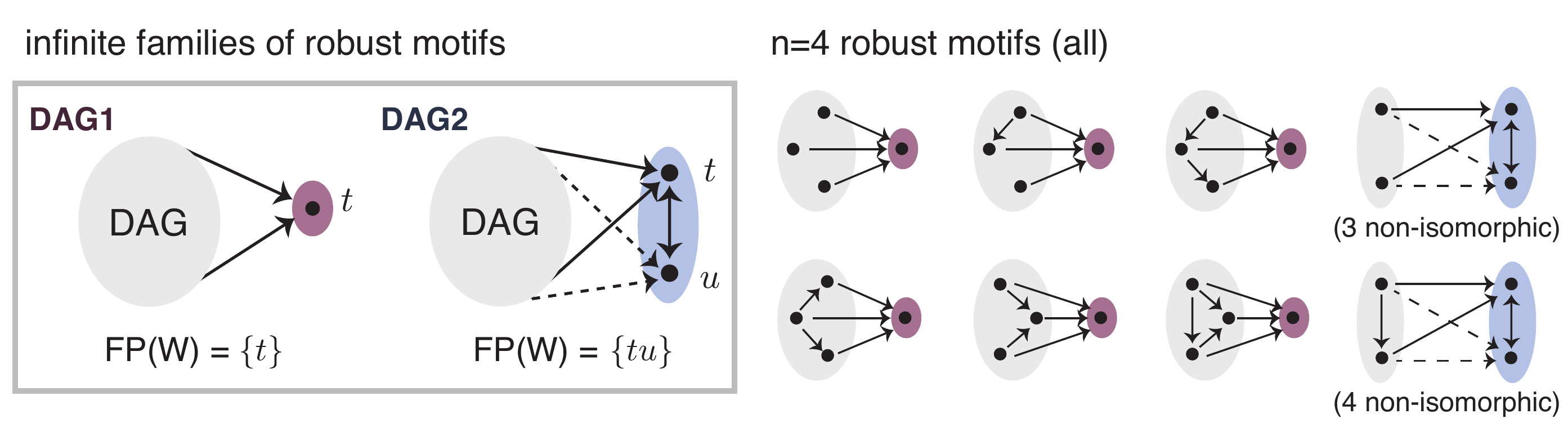}
\caption{Robust motifs for $n=4$. All robust motifs for $n \geq 4$ come from the DAG1 and DAG2 families (left). For $n=4$, there are $13$ robust motifs. Six are from the DAG1 family, and seven are from DAG2 (right).}
\label{fig:robust-motifs}
\end{center}
\vspace{-.2in}
\end{figure}

A key result that we'll use to prove Theorem~\ref{thm:robust-motifs} is a fixed point collapse theorem, below, which applies to graphs with a source and a target. To state the theorem, we must first define source-target decompositions.

\begin{definition}
Let $G$ be a graph on $n$ nodes that contains both a source and a target. A {\it source-target decomposition} of $G$ is any partition of the nodes $\omega \:\dot\cup\: \tau = [n]$ such that $\omega$ contains a source, $\tau$ contains a target, $G|_\omega$ is a DAG, and there are no edges from $\tau$ to $\omega$.
\end{definition}

Every graph that contains a source $s$ and target $t$ has at least one source-target decomposition: one can simply take $\omega=\{s\}$ and $\tau = [n] \setminus \{s\}$.  Recall from Theorem~\ref{thm:flexible-motifs} that the invariant forbidden motifs are precisely the graphs that contain both a source and a target. In particular, the following ``collapse'' result applies to all invariant forbidden motifs. Here the notation $W_\tau$ refers to the principal $|\tau| \times |\tau|$ submatrix of $W$ obtained by restricting to entries indexed by elements of $\tau$.

\begin{theorem}\label{thm:collapse}
Let $G$ be a graph that contains both a source and a target. Then for any source-target decomposition of $G$, with partition $\omega \:\dot\cup\: \tau = [n]$, we have
$$\FP(W) = \FP(W_\tau)$$
for every $W$ with graph $G_W = G$. In particular, every fixed point of a competitive TLN with graph $G$ has support contained in $\tau$.
\end{theorem}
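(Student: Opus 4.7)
The plan is to prove the two containments $\FP(W) \subseteq \FP(W_\tau)$ and $\FP(W_\tau) \subseteq \FP(W)$ separately. The first requires showing that every fixed point support of $W$ is in fact contained in $\tau$, which is the ``collapse'' content of the theorem; the second requires checking that a fixed point of the restricted system $W_\tau$ automatically satisfies the off-support inequalities at the $\omega$-coordinates, so it lifts to a fixed point of the full $W$. The key quantitative observation that drives both directions is the strict bound $\sum_{j\in\sigma} x_j^* > \theta$, coming from the target of $G$.

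For the forward direction, suppose for contradiction that $\sigma \in \FP(W)$ with $\sigma\cap\omega \neq \emptyset$. Since $G|_\omega$ is a DAG, so is $G|_{\sigma\cap\omega}$, and hence it has a topological source $s'$. I claim $s'$ receives no incoming edges from any node of $\sigma \setminus \{s'\}$: no edges from $\sigma \cap \omega \setminus \{s'\}$ by choice of $s'$, and no edges from $\sigma \cap \tau$ because the source-target decomposition prohibits edges from $\tau$ to $\omega$. Consequently $W_{s'j} < -1$ for every $j \in \sigma\setminus\{s'\}$, and the fixed point equation at $s'$ gives
$$x_{s'}^* \;=\; \theta + \sum_{j \in \sigma \setminus \{s'\}} W_{s'j}\, x_j^* \;<\; \theta - \sum_{j\in \sigma \setminus \{s'\}} x_j^*,$$
which rearranges to $\sum_{j\in\sigma} x_j^* < \theta$. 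Now let $t \in \tau$ be a target of $G$, so that $W_{tj} > -1$ for all $j \neq t$. If $t \in \sigma$, the fixed point equation at $t$ yields $x_t^* > \theta - \sum_{j\in\sigma\setminus\{t\}} x_j^*$, i.e. $\sum_{j\in\sigma} x_j^* > \theta$; and if $t \notin \sigma$, the off-support inequality $\theta + \sum_{j\in\sigma} W_{tj} x_j^* \le 0$ together with the same bound $W_{tj} > -1$ again forces $\sum_{j\in\sigma} x_j^* > \theta$. Either way, this contradicts the source inequality. Hence $\sigma \subseteq \tau$, and because the on-support equations only involve entries of $W_\tau$, we get $\sigma \in \FP(W_\tau)$.

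For the reverse inclusion, let $\sigma \in \FP(W_\tau)$. The on-support equations and the off-support inequalities at nodes of $\tau \setminus \sigma$ are inherited immediately by $W$, so the only thing to check is the off-support inequality at each $i \in \omega$. Applying the target argument above inside $W_\tau$ gives the same strict lower bound $\sum_{j \in \sigma} x_j^* > \theta$. Combining this with the fact that $W_{ij} < -1$ whenever $i \in \omega$ and $j \in \tau$ (hence in particular for $j \in \sigma$) yields
$$\theta + \sum_{j\in\sigma} W_{ij}\, x_j^* \;<\; \theta - \sum_{j\in\sigma} x_j^* \;<\; 0,$$
which is exactly the off-support condition needed for $\sigma \in \FP(W)$.

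The main technical hurdle is the first direction: one must produce a node of $\sigma$ that is a source \emph{relative to all of} $\sigma$, not merely relative to $\sigma \cap \omega$, and this is precisely where the DAG hypothesis on $G|_\omega$ must be combined with the absence of $\tau \to \omega$ edges to forbid any possible ``backflow'' of edges into the chosen source. Once this is done, the source-vs.-target inequalities pinch $\sum_{j\in\sigma} x_j^*$ between strict bounds on either side of $\theta$, and exactly the same target-driven lower bound is recycled in the reverse direction. The argument uses only the strict graph inequalities ($W_{ij} > -1$ or $W_{ij} < -1$) and the definition of fixed point support, so nondegeneracy of $W$ is not explicitly invoked beyond guaranteeing that supports are well defined.
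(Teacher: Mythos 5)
Your proof is correct, but it takes a genuinely different route from the paper's. The paper proves the collapse by iteratively stripping off sources: it first shows, via strong domination (Lemmas~\ref{lemma:source-target-domination} and~\ref{lemma:strong-domination}, packaged as Proposition~\ref{prop:target-source}) together with the inheritance lemma, that removing a single source from a graph with a target leaves $\FP$ unchanged (Lemma~\ref{lemma:remove-sources}), and then peels off the nodes of $\omega$ one at a time along a topological ordering of the DAG $G|_\omega$. You instead argue directly from the fixed point equations: any support $\sigma$ meeting $\omega$ contains a node $s'$ that is a source relative to all of $\sigma$ (DAG structure plus the absence of $\tau \to \omega$ edges), forcing $\sum_{j\in\sigma}x_j^*\leq\theta$, while the target of $G$ forces $\sum_{j\in\sigma}x_j^*>\theta$, a contradiction; and for survival of a $W_\tau$-fixed point you combine the target-driven bound $\sum_{j\in\sigma}x_j^*\geq\theta$ with $W_{ij}<-1$ for $i\in\omega$, $j\in\tau$ to verify the off-support inequalities at all of $\omega$ in one shot. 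Your pinching inequality is essentially a hands-on instance of ``the target strongly dominates the source,'' so the mathematical content overlaps, but your version is self-contained, avoids the induction and the domination machinery, and checks survival at all $\omega$-nodes simultaneously rather than one source at a time; the paper's route, in exchange, factors the argument through lemmas that are reused elsewhere (e.g.\ in the proof of Theorem~\ref{thm:robust-motifs} and Appendix A.2). One small imprecision: when $\sigma$ is a singleton ($\sigma=\{s'\}$ in the forward direction, or $\sigma=\{t\}$ in the reverse), the sums over $\sigma\setminus\{s'\}$ (resp.\ $\sigma\setminus\{t\}$) are empty and your claimed strict inequalities degenerate to the equality $\sum_{j\in\sigma}x_j^*=\theta$; the argument still goes through because the opposing bound (the target inequality in the forward direction, the strict inequality $W_{ij}<-1$ in the reverse) remains strict, but you should state those bounds as non-strict or treat the singleton case separately.
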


Although the theorem holds for any source-target decomposition, it is clearly most informative when $\tau$ is chosen to be as small as possible. It turns out that this choice is unique, and corresponds to the choice where $G|_\tau$ has no sources.  To see this, observe that if $\omega \:\dot\cup\: \tau = [n]$ is source-target decomposition and $G|_\tau$ contains a source node $v$ (meaning $v$ receives no edges from $\tau$), then one can obtain a new decomposition by moving $v$ from $\tau$ to $\omega$. It is easy to see that the new partition still satisfies all the properties of a source-target decomposition. This process can be continued until $G|_\tau$ contains no remaining sources (see Figure~\ref{fig:source-target-decomposition-example}). Finally, it is straightforward to show that the minimal $\tau$ obtained in this way is unique.

\begin{figure}[!h]
\vspace{-.05in}
\begin{center}
\includegraphics[width=.9\textwidth]{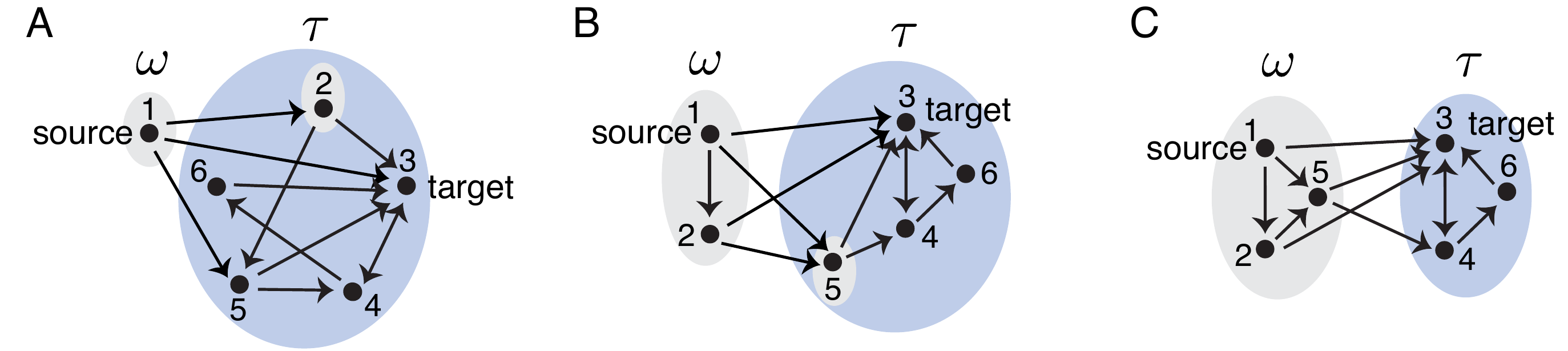}
\vspace{.1in}
\caption{Iterative construction of source-target decompositions. (A) A source-target decomposition of a graph where $\omega$ contains only a single source.  The gray highlighted node $2$ is a source in $G|_\tau$, but not a source in the full graph.  (B) A second source-target decomposition is obtained by moving node $2$ to $\omega$. Now node $5$ has become a source in the new $G|_\tau$. (C) A third source-target decomposition is obtained by moving $5$ to $\omega$. In this decomposition, $G|_\tau$ has no sources. We have thus arrived at the unique source-target decomposition with minimal $\tau$.}
\label{fig:source-target-decomposition-example}
\end{center}
\vspace{-.2in}
\end{figure}

Theorem~\ref{thm:collapse} allows us to compute $\FP(W)$ for the robust motifs in DAG1 and DAG2, as these are all invariant forbidden. This is because DAG1 and DAG2 have obvious source target decompositions with $\tau = \{t\}$ and $\tau = \{t,u\}$, respectively, using the notation for the nodes in Figure~\ref{fig:robust-motifs}. Theorem~\ref{thm:collapse} then tells us that $\FP(W) = \FP(W_{\{t\}}) = \{t\}$ for any $W$ corresponding to a graph in DAG1. For graphs in DAG2, the theorem gives $\FP(W) = \FP(W_{\{t,u\}})$, where $\{t,u\}$ is the $2$-clique. In \cite{fp-paper} this was shown to have a unique fixed point of full support, and so we have $\FP(W) = \{tu\}$ for every $W$ with graph in DAG2. In both the DAG1 and DAG2 cases, the fixed points collapse to those of a singleton and a clique, respectively. It then follows from results in \cite{fp-paper} that the unique fixed point is also {\it stable}. This is true for all graphs belonging to the infinite families DAG1 and DAG2, and also for the singleton and the $2$-clique in isolation (also robust motifs, see Figure~\ref{fig:venn-diagram}B). We summarize these observations in the following corollary.

\begin{corollary}
Let $G$ be a robust motif that is not the $3$-cycle nor the independent set of size $2$. Then for all $W$ with graph $G_W = G$, the corresponding competitive TLN has a unique fixed point, and this fixed point is stable.
\end{corollary}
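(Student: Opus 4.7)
The plan is to enumerate robust motifs via Theorem~\ref{thm:robust-motifs} and, for each surviving case, pin down $\FP(W)$ exactly using the collapse theorem together with standard facts about singletons and cliques. Excluding the $3$-cycle and the independent set of size $2$, Theorem~\ref{thm:robust-motifs} leaves exactly four possibilities: (a) $G$ is the singleton, (b) $G$ is the $2$-clique, (c) $G \in \mathrm{DAG1}$, or (d) $G \in \mathrm{DAG2}$. In case (a), the scalar dynamics $\dot{x}_1 = -x_1 + [\theta]_+$ admit the single stable fixed point $x_1 = \theta$, so $\FP(W) = \{1\}$. In case (b), results in \cite{fp-paper} guarantee that the $2$-clique always admits a unique fixed point of full support, and that this fixed point is stable.

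For cases (c) and (d), I would apply Theorem~\ref{thm:collapse} to the canonical source-target decompositions already singled out in the discussion preceding the corollary. In DAG1, the minimal decomposition has $\tau = \{t\}$ for the unique target $t$, and the theorem yields $\FP(W) = \FP(W_{\{t\}}) = \{t\}$, reducing to case (a). In DAG2, the minimal decomposition has $\tau = \{t,u\}$, where $\{t,u\}$ is the terminal $2$-clique, and $\FP(W) = \FP(W_{\{t,u\}}) = \{tu\}$, reducing to case (b). In each reduction the unique fixed point of the full system is inherited from the restricted system on $\tau$, because the nodes of $\omega$ carry zero firing rate at the fixed point (as ensured by the collapse theorem).

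The main obstacle is passing from uniqueness of the support to stability of the full-dimensional fixed point. The point is that stability requires the Jacobian of the active linear region to be Hurwitz. Here, however, the active region corresponds to the support $\tau$, and the relevant block of the Jacobian is exactly $-I + W_\tau$, whose spectrum is governed by the singleton/clique analysis in \cite{fp-paper}; the remaining block contributes only $-I$ to the off-region coordinates, which are zero and remain stably so because the inhibition from $\tau$ keeps every node in $\omega$ in the inactive linear region. Thus stability transfers verbatim from the restricted subnetwork on $\tau$, completing the argument in all four cases.
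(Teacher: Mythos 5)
Your proposal matches the paper's own argument: enumerate the robust motifs other than the $3$-cycle and the size-$2$ independent set via Theorem~\ref{thm:robust-motifs} (singleton, $2$-clique, DAG1, DAG2), collapse the DAG1 and DAG2 cases to the terminal singleton or $2$-clique via Theorem~\ref{thm:collapse}, and invoke the known uniqueness and stability facts for those two small motifs from \cite{fp-paper}. Your explicit block-triangular Jacobian argument for transferring stability from $W_\tau$ to the full network is a correct fleshing-out of a step the paper simply attributes to results in \cite{fp-paper}, not a different route.
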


Based on this fixed point structure, we expect that all but two of the robust motifs will yield extremely boring network dynamics. Namely, the activity should always converge to the stable fixed point. In the case of the $3$-cycle, the typical behavior we expect is that of a limit cycle resembling the one depicted in Figure~\ref{fig:flexible-motif}B (top). Finally, the independent set of size $2$ has two stable fixed points, supported on each of the singletons, and a full support fixed point which is unstable and acts as a tipping point. Here we expect standard WTA dynamics, with the network converging to one stable fixed point or the other depending on initial conditions.

\begin{figure}[!h]
%\vspace{-.05in}
\begin{center}
\includegraphics[width=.95\textwidth]{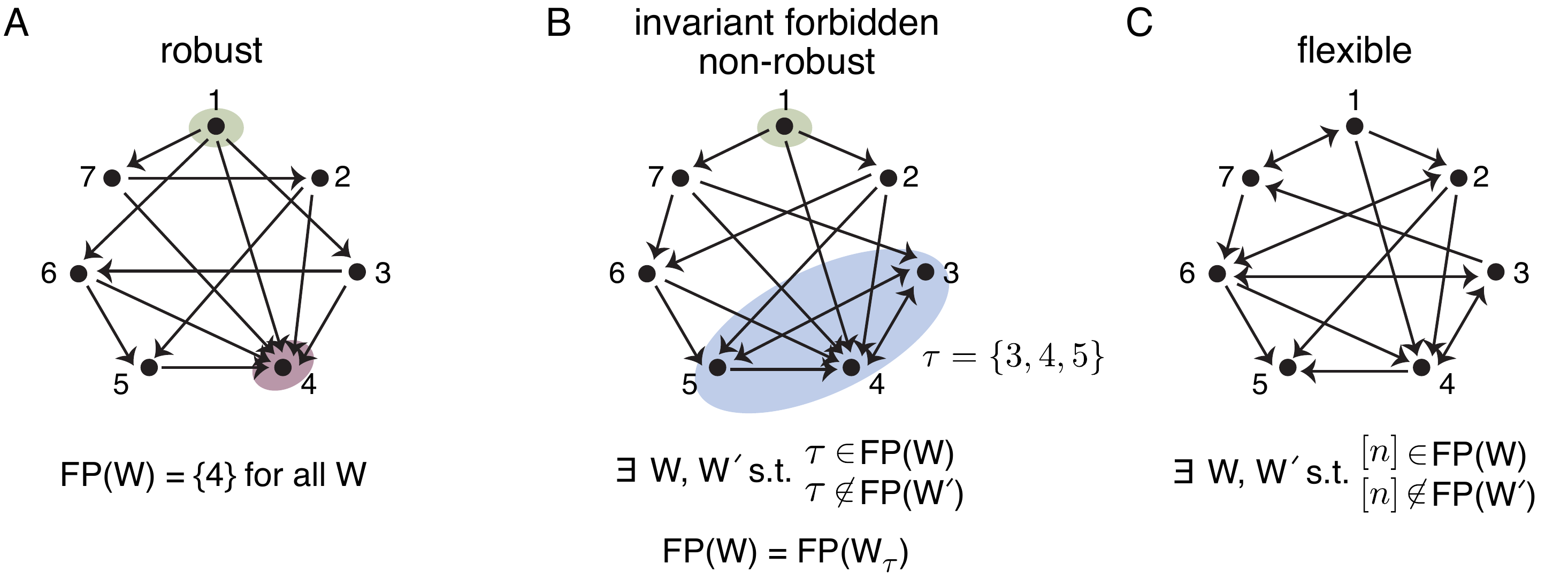}
\caption{Graphical analysis of TLN networks. (A) A graph with a source (node 1) and a target (node 4) that has a source-target decomposition where $\tau$  is precisely the target node. Clearly, this graph belongs to the DAG1 family and is thus a robust motif, with a single fixed point supported on the target. (B) A graph with a source (node 1) and a target (node 4) such that the source-target decomposition with minimal $\tau$ has $\tau =\{3,4,5\}$.  It follows that $\FP(W)=\FP(W_\tau)$ for all $W$ with graph $G$. Moreover, $G|_\tau$ is flexible so there exist $W, W'$ such that  $\tau \in \FP(W_\tau)$ and $\tau \notin \FP(W_\tau')$.  (C) A graph that does not have both a source and a target, and is thus flexible.
}
\label{fig:large-examples}
\end{center}
\vspace{-.1in}
\end{figure}

Putting together Theorems~\ref{thm:flexible-motifs},~\ref{thm:robust-motifs}, and~\ref{thm:collapse} allows us to determine from a purely graphical analysis whether any given motif is flexible, invariant permitted, invariant forbidden, or robust. Furthermore, in the case of invariant forbidden motifs (having a source and a target), we can reduce the computation of $\FP(W)$ to that of a subnetwork, $\FP(W_\tau)$. Figure~\ref{fig:large-examples} illustrates the kinds of conclusions we can draw for a set of three example networks that differ in only a small number of edges.

The organization of the rest of this article is as follows. In Section~\ref{sec:background}, we review some basic background about fixed points of TLNs. In Section~\ref{sec:inv-motifs}, we develop the theory related to invariant permitted, invariant forbidden, and flexible motifs, and use it to prove Theorem~\ref{thm:flexible-motifs}. Finally, in Section~\ref{sec:robust} we prove Theorems~\ref{thm:robust-motifs} and~\ref{thm:collapse}.

\FloatBarrier
%%%%%%%%%%%%%%%%%%%

\section{Preliminaries}\label{sec:background}

A {\it fixed point} $x^*$ of a TLN $(W,\theta)$ is a point in the state space of~\eqref{eq:dynamics} satisfying 
$\left.\dfrac{dx_i}{dt}\right|_{x = x^*} = 0$ for each $i \in [n]$, where $[n] \od \{1,\ldots,n\}$.
In other words,
\begin{equation}\label{eq:x*}
\quad x_i^* = \left[\sum_{j=1}^n W_{ij}x_j^*+\theta \right]_+,  \quad \text{for all } i \in [n].
\end{equation}
The {\it support} of a fixed point $x^*$ is the subset of active neurons,
$$\supp(x^*) \od \{i \in [n] \mid x_i^*>0\}. \quad$$
We typically refer to supports as subsets $\sigma \subseteq [n]$.  We denote the set of all fixed point supports of a TLN $(W,\theta)$ as
$$\FP(W) \od \{\sigma \subseteq [n] \mid \sigma \text{ is the support of a fixed point}\}.$$ 
Note that we omit $\theta$ from the notation $\FP(W)$ because the value of $\theta >0$ has no impact on the fixed point supports, it simply scales the precise value of the corresponding fixed points. 

We will often restrict matrices and vectors to a particular subset of neurons $\sigma$.
We use the notation $A_\sigma$ and $b_\sigma$ to denote a matrix $A$ and a vector $b$ that have been truncated to include only entries with indices in $\sigma$. Furthermore, we use the notation $(A_i;b)$ to denote a matrix $A$ whose $i$th column has been replaced by the vector $b$, as in Cramer's rule. In the case of a restricted matrix, $((A_\sigma)_i;b_\sigma)$ denotes the matrix $A_\sigma$ where the column corresponding to the index $i \in \sigma$ has been replaced by $b_\sigma$ (note that this is not typically the $i$th column of $A_\sigma$).

\begin{definition} \label{def:nondegenerate}
We say that a TLN $(W,\theta)$ is {\it nondegenerate} if 
\begin{itemize}
\item $\det(I-W_\sigma) \neq 0$ for each $\sigma \subseteq [n]$, and 
\item for each $\sigma \subseteq [n]$ and all $i \in \sigma$,  the corresponding Cramer's determinant is nonzero: $\det((I-W_\sigma)_i;\theta) \neq 0$. 
\end{itemize}
\end{definition}
\noindent Note that almost all TLNs are nondegenerate, since having a zero determinant is a highly fine-tuned condition.  
Throughout the following, we will restrict consideration to TLNs that are both {\bf competitive}, so that $W_{ij} < 0$ for $i \neq j$ and $W_{ii}=0$ for all $i \in [n]$, and {\bf nondegenerate}.

Observe that when a TLN $(W, \theta)$ is nondegenerate, each fixed point is completely determined by its support, and thus it suffices to study the collection of supports, $\FP(W)$, rather than the specific fixed point values.  In particular, for $\sigma \in \FP(W)$, let 
\begin{equation}\label{eq:xsigma}
x^\sigma \od \theta (I-W_\sigma)^{-1} 1_\sigma, \quad
\end{equation}
where $W_\sigma$ is the principal submatrix of $W$ obtained by truncating $W$ to just the rows and columns indexed by $\sigma$, and $1_\sigma$ refers to the all-ones vector similarly truncated to the indices in $\sigma$.  Then the vector $x^*$ given by $x^*_i=x^\sigma_i$ for all $i \in \sigma$ and $x^*_k=0$ for all $k \notin \sigma$ is the fixed point of $(W, \theta)$ with support $\sigma$.  

This formula for $x^\sigma$ also gives a straightforward way to check if a given $\sigma$ is the support of a fixed point of a TLN $W$, i.e.\ if $\sigma \in \FP(W)$.  It follows from Equation~\ref{eq:x*} that $\sigma \in \FP(W)$ if and only if the following two \emph{fixed point conditions} hold:
\begin{itemize}
\item[(1)] $x_i^\sigma > 0$ for all $i \in \sigma$, and
\item[(2)]  $\sum_{i\in\sigma} W_{ki}x_i^\sigma+\theta \leq 0$ for all $k \notin \sigma$. 
\end{itemize}
(This is straightforward, but see \cite{pattern-completion} for more details.) 

Observe that the first fixed point condition only depends on the values of the restricted submatrix $W_\sigma$, and thus a necessary condition for $\sigma \in \FP(W)$ is that $\sigma \in \FP(W_\sigma)$.  We say that $\sigma$ is a \emph{permitted motif of $W$} if $\sigma \in \FP(W_\sigma)$; otherwise, we say $\sigma$ is a \emph{forbidden motif of $W$}.   In general, the permittedness of a motif will depend on the specific choice of $W$.  However, in the special case when $\sigma \in \FP(W|_\sigma)$ for every $W_\sigma$ with graph $G|_\sigma$, then we say that $G|_\sigma$ is an \emph{invariant permitted motif}.  On the other hand if $\sigma \not\in \FP(W|_\sigma)$ for every $W_\sigma$ with graph $G|_\sigma$, then we say that $G|_\sigma$ is an \emph{invariant forbidden motif}.

When $\sigma$ is a permitted motif of $W$, it produces a fixed point of the full network $W$ precisely it \emph{survives} the addition of all external nodes $k \notin \sigma$.  Fixed point condition (2) shows that survival can be checked one external node at a time, as each inequality to be checked for $k \notin \sigma$ depends only on the entries of $W_{\sigma \cup k}$.  These observations are summarized in the following lemma, which first appeared in \cite{fp-paper}.

\begin{lemma}[Corollary 2 in \cite{fp-paper}]\label{lemma:inheritance} 
Let $(W,\theta)$ be a TLN on $n$ neurons, and let $\sigma \subseteq [n]$.
The following are equivalent:

\begin{enumerate}
\item $\sigma \in \FP(W)$
\item $\sigma \in \FP(W_\tau)$ for all $\sigma \subseteq \tau \subseteq [n]$.
\item $\sigma \in \FP(W_\sigma)$ and $\sigma \in \FP(W_{\sigma \cup \{k\}})$ for all $k \notin \sigma$
\item $\sigma \in \FP(W_{\sigma \cup \{k\}})$ for all $k \notin \sigma$
\end{enumerate}
\end{lemma}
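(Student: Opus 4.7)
The plan is to establish a cyclic chain of implications, $(1) \Rightarrow (2) \Rightarrow (3) \Rightarrow (4) \Rightarrow (1)$, exploiting the fundamental observation already highlighted in the preliminaries: the two fixed point conditions decouple nicely. Condition (1), namely $x_i^\sigma > 0$ for all $i \in \sigma$, depends only on the principal submatrix $W_\sigma$ (through the formula $x^\sigma = \theta (I-W_\sigma)^{-1} 1_\sigma$). Condition (2), the inequality $\sum_{i \in \sigma} W_{ki} x_i^\sigma + \theta \leq 0$, can be verified one external index $k \notin \sigma$ at a time, and each such inequality depends only on the entries of $W_{\sigma \cup \{k\}}$. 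This "locality" is what makes all four conditions equivalent.

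For $(1) \Rightarrow (2)$, I would fix $\tau$ with $\sigma \subseteq \tau \subseteq [n]$ and verify that the same vector $x^\sigma$ witnesses $\sigma \in \FP(W_\tau)$. Condition (1) is identical (same $W_\sigma$). Condition (2) for $W_\tau$ requires checking $k \in \tau \setminus \sigma$, a sub-collection of the inequalities already known from $\sigma \in \FP(W)$. The step $(2) \Rightarrow (3)$ is just specialization: take $\tau = \sigma$ and $\tau = \sigma \cup \{k\}$ for each $k \notin \sigma$. The step $(3) \Rightarrow (4)$ is immediate because the second clause of (3) is already (4).

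The non-trivial direction is $(4) \Rightarrow (1)$. Assuming $\sigma \in \FP(W_{\sigma \cup \{k\}})$ for every $k \notin \sigma$, I would first extract from any one such $k$ that $x_i^\sigma > 0$ for all $i \in \sigma$ (condition (1)), using that this inequality depends only on $W_\sigma$. Then, the hypothesis gives the inequality $\sum_{i \in \sigma} W_{ki} x_i^\sigma + \theta \leq 0$ for each $k \notin \sigma$ separately, which is exactly condition (2) for the full network. Together these yield $\sigma \in \FP(W)$.

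The main obstacle, such as it is, is purely bookkeeping: one must handle the degenerate case $\sigma = [n]$, where the index set $\{k : k \notin \sigma\}$ is empty and (4) becomes vacuous. In that case the claim $(4) \Rightarrow (1)$ needs the additional hand-check that $\sigma \in \FP(W_\sigma) = \FP(W)$, which is forced by (3) when read for $\sigma = [n]$; alternatively one can simply state the lemma for $\sigma \subsetneq [n]$, since $\sigma = [n]$ trivially satisfies the equivalence $(1) \Leftrightarrow (2) \Leftrightarrow (3)$ and (4) is interpreted as vacuously true. No further computation with Cramer's rule or with $(I - W_\sigma)^{-1}$ is needed, as nondegeneracy ensures $x^\sigma$ is well-defined and the entire argument proceeds by direct substitution into the two fixed point conditions.
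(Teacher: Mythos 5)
Your proposal is correct and takes essentially the same approach as the paper: the paper does not reprove this lemma (it quotes it as Corollary 2 of the cited prior work) but justifies it with exactly the locality observation you exploit---the on-neuron condition depends only on $W_\sigma$, while the off-neuron condition can be checked one external node $k$ at a time since each inequality involves only the entries of $W_{\sigma \cup \{k\}}$. Your cyclic chain of implications, together with the remark about the vacuous case $\sigma = [n]$, simply fills in the routine details of that sketch.
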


Another key property of $\FP(W)$ is \emph{parity}:

\vspace{-.05in}
\begin{lemma}[parity \cite{CTLN-paper, fp-paper}]\label{lemma:parity}
 Let $(W, \theta)$ be a TLN.  Then the total number of fixed points, $|\FP(W)|,$ is odd.  
 \end{lemma}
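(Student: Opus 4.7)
The plan is to prove the parity lemma by a topological degree / fixed point index argument. The vector field of the dynamics is $F(x) = -x + [Wx+\theta]_+$, and fixed points of the TLN are precisely the zeros of $F$ in the nonnegative orthant. Nondegeneracy, as formulated in Definition~\ref{def:nondegenerate}, will ensure that every zero is isolated and transverse in the appropriate sense, so it contributes a well-defined index $\pm 1$.

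First I would establish dissipativity. Because $[\,\cdot\,]_+ \geq 0$ and each entry of $Wx$ is bounded below by a linear function of $x$ whose negative $-x_i$ dominates for large $x_i$, one can choose $R$ large enough that $F$ points strictly inward on the boundary of $B_R^+ := B_R \cap \RR_{\geq 0}^n$. Thus all fixed points lie in the interior of this convex trapping region, and the flow $\dot{x} = F(x)$ extends to a vector field on $B_R^+$ with inward-pointing boundary behavior. By the Poincar\'e–Hopf / Brouwer degree theorem applied to such a vector field on a convex body (which has Euler characteristic $+1$), the sum of local indices of the zeros of $F$ equals $+1$.

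Next I would compute the local index at each fixed point $x^\sigma$ with support $\sigma$. On the open chamber containing $x^\sigma$, defined by $\{x : (Wx+\theta)_i > 0 \text{ for } i\in\sigma,\ (Wx+\theta)_k < 0 \text{ for } k\notin\sigma\}$, the map $F$ is linear with Jacobian equal to $-I + \tilde W$, where $\tilde W$ is the matrix whose rows indexed by $\sigma$ agree with $W$ and whose rows indexed by $[n]\setminus\sigma$ are zero. A block computation gives $\det(-I + \tilde W) = (-1)^{n-|\sigma|}\det(-I_\sigma + W_\sigma) = (-1)^n\det(I-W_\sigma)$, so the local index of the zero $x^\sigma$ of $F$ is $\sgn\det(I-W_\sigma)$, which is $\pm 1$ by nondegeneracy. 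Since the fixed point lies in the open chamber (the inequalities on $k\notin\sigma$ are strict by the nondegeneracy of the corresponding Cramer determinants), no issues arise from the non-differentiability of $F$ on chamber boundaries.

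Summing indices then yields
\[
\sum_{\sigma \in \FP(W)} \sgn\det(I-W_\sigma) \;=\; +1,
\]
and since each summand is $\pm 1$, the number of summands must be odd. Hence $|\FP(W)|$ is odd. The main obstacle I anticipate is rigorously justifying the index/degree computation in the piecewise-linear setting; the cleanest route is to smooth $[\,\cdot\,]_+$ by $[\,\cdot\,]_{+,\varepsilon}$ (a $C^1$ approximation that agrees with $[\,\cdot\,]_+$ outside an $\varepsilon$-neighborhood of $0$), apply classical Poincar\'e–Hopf to the smoothed vector field, and then use nondegeneracy to verify that for small $\varepsilon$ the zeros and their indices are in bijection with those of the original $F$ before passing to the limit.
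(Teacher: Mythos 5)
The paper itself does not prove this lemma---parity is quoted from \cite{CTLN-paper, fp-paper}---so there is no in-paper argument to compare against; your degree-theoretic strategy is essentially the one underlying the cited index theorem, and your key identity $\sum_{\sigma \in \FP(W)} \sgn\det(I-W_\sigma) = +1$ is exactly the statement from which parity follows. Your chamber analysis is also sound: nondegeneracy (all Cramer determinants nonzero) makes the off-support inequalities strict, so each fixed point sits in the open interior of a linearity chamber of $F$, the Jacobian computation $\det(-I+\widetilde W) = (-1)^n \det(I-W_\sigma)$ is correct, and in particular no smoothing of $[\,\cdot\,]_+$ is needed (degree theory only requires continuity of $F$; differentiability is only used near the zeros, where $F$ is linear).

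The genuine gap is in the global step. You take the region $B_R^+ = B_R \cap \RR^n_{\geq 0}$, assert that $F$ points strictly inward on its boundary and that all fixed points are interior, and invoke Poincar\'e--Hopf with $\chi = +1$. But every fixed point with support $\sigma \subsetneq [n]$ has $x_k^* = 0$ for $k \notin \sigma$ and therefore lies \emph{on} a coordinate face of $B_R^+$, where $F_k = [Wx+\theta]_+ \geq 0$ is only weakly inward (and vanishes at the fixed point itself); Poincar\'e--Hopf in the form you use does not apply to a region with corners whose zeros sit on the boundary. The standard repair is to work in all of $\RR^n$: any zero of $H_t(x) = -x + t[Wx+\theta]_+$, $t \in [0,1]$, satisfies $x = t[Wx+\theta]_+$, hence $x \geq 0$ and then (competitiveness, $W_{ij}\le 0$) $x \leq \theta\mathbf{1}$, so for $R$ large $\deg(F,B_R,0) = \deg(-\mathrm{id},B_R,0) = (-1)^n$; equivalently, apply the Brouwer fixed-point index to the map $g(x) = [Wx+\theta]_+$, which sends a large box into itself, giving total index $+1$ with local index $\sgn\det(I-W_\sigma)$. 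Relatedly, your sign conventions are mixed: with the vector-field convention the local index at $x^\sigma$ is $(-1)^n\sgn\det(I-W_\sigma)$ and the total for an inward field is $(-1)^n$, not $+1$; these two $(-1)^n$ discrepancies cancel, so your displayed identity and the parity conclusion are correct once the boundary issue above is fixed.
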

 
\vspace{-.05in} 
\noindent Parity is particularly useful for determining when a network has a full-support fixed point.  If we know which of the smaller subsets are permitted and forbidden motifs, and we know which of the permitted motifs survive, then parity immediately tells us whether or not $W$ has a full-support fixed point; this will be the key to proving that the $3$-cycle always has a full-support fixed point and is in fact robust. In general, though, determining the survival of permitted motifs, in order to apply parity, is nontrivial.

%%%%%%%%%%%%%%%%%%%%%%%%%%%%%%%%%%%%%%%%%%%%%%%%%%%%%%%%

\section{Invariant permitted, invariant forbidden, and flexible motifs}\label{sec:inv-motifs}
In this section we develop the tools needed to prove Theorem~\ref{thm:flexible-motifs} providing a complete characterization of invariant permitted, invariant forbidden, and flexible motifs.  We begin by reviewing some prior results on how connectivity constrains $\FP(W)$ for particularly small graphs.  In \cite[Appendix A.4]{fp-paper}, $\FP(W)$ was determined for all graphs of size $n \leq 2$; these findings are summarized in Figure~\ref{fig:n2-graphs}.  From this, we see that all the graphs up through size 2 are robust.  Moreover, since the singleton, the independent set of size $2$, and the $2$-clique all have full-support fixed points (see panels A-C), these are invariant permitted motifs, while the directed edge (panel D) is invariant forbidden.  In Section~\ref{sec:inv-proofs}, we prove that the $3$-cycle (panel E) always has $\FP(W) = \{123\}$, and thus is invariant permitted and robust.  

\begin{figure}[!h]
\begin{center}
\includegraphics[width=6.5in]{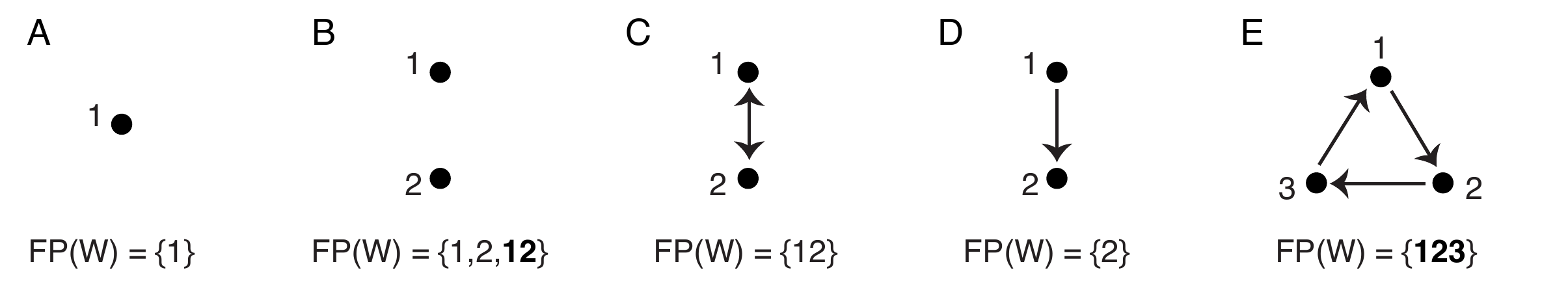}
\caption{(A-D) For each directed graph on $n \leq 2$ nodes, the allowed fixed point supports $\FP(W)$ are completely constrained by the structure of the associated graph $G_W$. (E) The $3$-cycle graph also has a unique set $\FP(W)$, consisting of a single fixed point support. Note, additionally, that each fixed point here is always stable, with the exception of the $12$ fixed point in panel B (supported on the independent set) and the fixed point of panel E (supported on the $3$-cycle), both shown in bold. For larger graphs, most fixed points are unstable and stability can depend on the choice of $W$.}
\label{fig:n2-graphs}
\end{center}
\vspace{-.2in}
\end{figure}

\begin{figure}[!h]
\begin{center}
\includegraphics[width=6.75in]{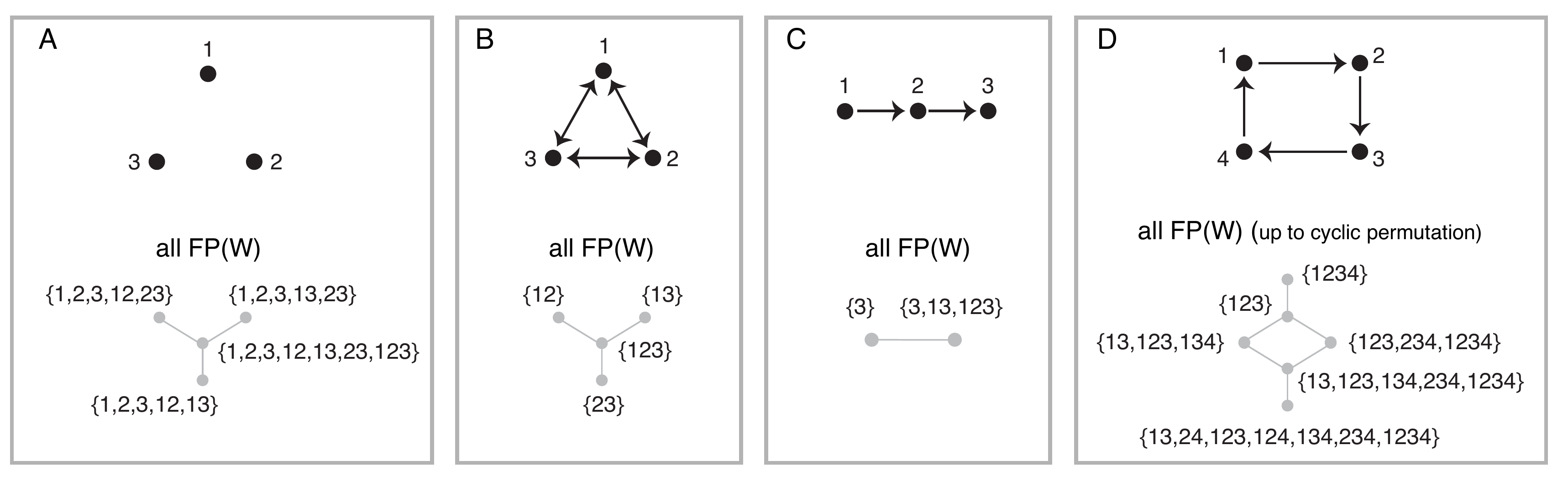}
%\vspace{.1in}
\caption{Graphs with multiple possibilities for $\FP(W)$. (A) The independent set of size $3$ always contains all three singletons $1, 2$ and $3$ in $\FP(W)$, but the larger fixed point supports depend on the choice of $W$. (B) The $3$-clique always supports a unique fixed point, but the support can either be one of the $2$-cliques or the full $3$-clique. (C) The path of length $2$ has two possibilities for $\FP(W)$. (D) The $4$-cycle has 16 distinct dynamic regimes for $\FP(W)$. Here we show only 6: the remainder can be obtained via cyclic permutation. Note that for all four graphs, A-D, the presence of the full-support fixed point is parameter-dependent, and thus these graphs are all flexible motifs.}
\label{fig:n3-graphs}
\end{center}
\vspace{-.2in}
\end{figure}

Interestingly, none of the natural generalizations of these graphs are robust, and in fact they are all flexible.  For example, both the independent set and clique on $n=3$ nodes have four distinct $\FP(W)$ regimes, as shown in Figure~\ref{fig:n3-graphs}A-B, the path of length 2 has two possibilities for $\FP(W)$ (see Figure~\ref{fig:n3-graphs}C), and each of these graphs has at least one $W$ regime with a full-support fixed point and one without. Notice however that within the class of flexible motifs there is still a significant range in terms of how much the graph constrains the set of possible $\FP(W)$.  For example, the path of length 2 has only two possibilities for $\FP(W)$, while the $4$-cycle has sixteen different possibilities for $\FP(W)$, ranging from a single fixed point of full support to a collection of $7$ fixed points with a variety of supports (see Figure~\ref{fig:n3-graphs}D).  The extent of the graphical constraints on $\FP(W)$ can be somewhat explained by the number of proper subgraphs that are invariant permitted or invariant forbidden; this is explored further in \cite{oriented-matroids-paper}, where the machinery of oriented matroids is used to work out all the possible $\FP(W)$ regimes for small competitive TLNs.

In the remainder of Section~\ref{sec:inv-motifs}, we develop the machinery needed to prove Theorem~\ref{thm:flexible-motifs}.  We begin in Section~\ref{sec:targets-sources-subgraphs} by considering subgraphs with targets and sources and determining when these can support surviving fixed points in the full network in Proposition~\ref{prop:target-source}.  In Section~\ref{sec:strong-domination}, we introduce the notion of \emph{strong domination}, which will be key to the proof of Proposition~\ref{prop:target-source}.  Finally, in Section~\ref{sec:inv-proofs}, we combine these results with Propositions~\ref{prop:fix-pt-construction} and~\ref{prop:node-complement}, which give explicit constructions of $W$ with and without full-support fixed points, in order to prove Theorem~\ref{thm:flexible-motifs}.  

\subsection{Targets and sources of subgraphs}\label{sec:targets-sources-subgraphs}

We begin by defining internal and external targets and sources of subgraphs.

\begin{definition}[targets and sources] A node $k$ of $G$ is a {\it target of} $G|_\sigma$ if $i \to k$ for all $i \in \sigma \setminus k$. We call $k$ an {\it internal target} if $k \in \sigma$, and an {\it external target} if $k \notin \sigma$. Similarly, we say that $j$ is a {\it source of} $G|_\sigma$ if $i \not\to j$ for all $i \in \sigma \setminus j$. We call $j$ an {\it internal source} if $j \in \sigma$, and an {\it external source} if $j \notin \sigma$. 
\end{definition}

\begin{proposition}\label{prop:target-source}
Let $G$ be a graph on $n$ nodes, and $\sigma \subseteq [n]$ nonempty. Then for any $W$ with graph $G_W = G,$ the following statements hold:
\begin{itemize}
\item[(i)] If $G|_\sigma$ has an internal source and an internal target, then $\sigma \notin \FP(W_\sigma)$ and thus $\sigma \notin \FP(W)$.
\item[(ii)] If $G|_\sigma$ has an internal source and an external target, then $\sigma \notin \FP(W)$.
\item[(iii)] If $G|_\sigma$ has an internal target and no outgoing edges (i.e., all nodes outside $\sigma$ are external sources) then $\sigma \in \FP(W) \; \Leftrightarrow \; \sigma \in \FP(W_\sigma)$.
\item[(iv)] If $G|_\sigma$ has an external source, $j \notin \sigma,$ and an external target, $k \notin \sigma$, then
$\sigma \in \FP(W_{\sigma \cup k}) \Rightarrow \sigma \in \FP(W_{\sigma \cup j}).$
\end{itemize}
\end{proposition}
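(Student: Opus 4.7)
The plan is to handle all four parts by working directly with the fixed-point equations and the survival inequality~(2) from Section~\ref{sec:background}, then exploiting the key sign dichotomy: the defining inequality of $G_W$ gives $W_{mi}>-1$ when $i\to m$ (target-like behavior) and $W_{mi}<-1$ when $i\not\to m$ (source-like behavior), whether $m$ is inside or outside $\sigma$. Whenever $\sigma\in\FP(W_\sigma)$, all $x_i^\sigma>0$ and $x_l^\sigma=\theta+\sum_{i\in\sigma\setminus l}W_{li}\,x_i^\sigma$ for every $l\in\sigma$; all four claims will follow by combining this with Lemma~\ref{lemma:inheritance}.

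For part (i), assume toward contradiction that $\sigma\in\FP(W_\sigma)$, let $j$ be the internal source and $k$ the internal target (so $j\neq k$), and subtract the $j$th equation from the $k$th to obtain
\begin{equation*}
(1+W_{jk})\,x_k^\sigma\;-\;(1+W_{kj})\,x_j^\sigma\;=\;\sum_{i\in\sigma\setminus\{j,k\}}(W_{ki}-W_{ji})\,x_i^\sigma.
\end{equation*}
Source-ness of $j$ forces $W_{jk}<-1$ and target-ness of $k$ forces $W_{kj}>-1$, so the left-hand side is a sum of two strictly negative terms. For every $i\in\sigma\setminus\{j,k\}$, $W_{ki}>-1>W_{ji}$, making each summand on the right positive. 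This gives a contradiction: when $|\sigma|\geq 3$ the RHS is strictly positive while the LHS is strictly negative, and when $|\sigma|=2$ the RHS vanishes but the LHS is still strictly negative. Hence $\sigma\notin\FP(W_\sigma)$, and by Lemma~\ref{lemma:inheritance} we have $\sigma\notin\FP(W)$. This subtraction argument is the main engine of the proposition; the remaining parts use lighter-weight consequences of the same sign rules.

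The key observation for parts (ii) and (iii) is that an internal source and an internal target produce \emph{opposite} one-sided bounds on $\sum_{i\in\sigma}x_i^\sigma$. Plugging $W_{ji}<-1$ into the $j$th fixed-point equation with $x_j^\sigma>0$ gives $\sum_{i\in\sigma}x_i^\sigma<\theta$; plugging $W_{ki}>-1$ into the $k$th equation gives $\sum_{i\in\sigma}x_i^\sigma>\theta$. For part (ii), assuming $\sigma\in\FP(W_\sigma)$ (otherwise we are done immediately), the first bound together with $W_{ki}>-1$ at the external target $k$ yields $\sum_{i\in\sigma}W_{ki}x_i^\sigma+\theta>\theta-\sum_{i\in\sigma}x_i^\sigma>0$, violating survival~(2) at $k$; hence $\sigma\notin\FP(W_{\sigma\cup k})$ and so $\sigma\notin\FP(W)$. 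For part (iii), the forward direction is immediate from Lemma~\ref{lemma:inheritance}; conversely, assuming $\sigma\in\FP(W_\sigma)$, every $m\notin\sigma$ is an external source by hypothesis, so $W_{mi}<-1$ gives $\sum_{i\in\sigma}W_{mi}x_i^\sigma+\theta<\theta-\sum_{i\in\sigma}x_i^\sigma<0$, survival~(2) holds at every external node, and Lemma~\ref{lemma:inheritance} yields $\sigma\in\FP(W)$.

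Part (iv) is essentially immediate. The hypothesis $\sigma\in\FP(W_{\sigma\cup k})$ forces $\sigma\in\FP(W_\sigma)$ together with the survival bound $\sum_{i\in\sigma}W_{ki}x_i^\sigma+\theta\leq 0$. Since $k$ is an external target and $j$ is an external source, $W_{ji}<-1<W_{ki}$ for every $i\in\sigma$, and positivity of $x^\sigma$ gives $\sum_{i\in\sigma}W_{ji}x_i^\sigma<\sum_{i\in\sigma}W_{ki}x_i^\sigma\leq-\theta$; thus survival~(2) also holds at $j$, proving $\sigma\in\FP(W_{\sigma\cup j})$. The only real subtlety across the entire proposition is the degenerate $|\sigma|=2$ case in part (i), where the RHS of the subtracted equation is an empty sum; fortunately the $(1+W_{jk})$ and $(1+W_{kj})$ terms alone already deliver the required contradiction, so no separate case analysis is needed.
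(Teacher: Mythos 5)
Your proof is correct, and it takes a genuinely more elementary route than the paper. The paper derives all four parts from domination machinery: it first shows (Lemma~\ref{lemma:source-target-domination}) that a source $j$ and target $k$ of $G|_\sigma$ always satisfy the strong domination relation $k \gg_\sigma j$, and then invokes Lemma~\ref{lemma:strong-domination}, which rests on the general-domination characterization of fixed point supports (Theorem~\ref{thm:domination}, imported from \cite{fp-paper} and unpacked in Appendix A.1). You instead argue directly from the fixed-point equations and the two fixed point conditions of Section~\ref{sec:background}, together with Lemma~\ref{lemma:inheritance}: your subtraction identity in part (i) is in effect a hands-on derivation of the inequality $\sum_{i\in\sigma}\Wtil_{ki}x_i^\sigma > \sum_{i\in\sigma}\Wtil_{ji}x_i^\sigma$ that contradicts permittedness (with $x_i^\sigma$ playing the role of $|s_i^\sigma|$), and parts (ii)--(iv) follow from checking the survival inequality (2) at the relevant external nodes. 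What your approach buys is self-containedness -- no appeal to the general domination theorem is needed for this proposition; what the paper's approach buys is reusability, since strong domination applies well beyond the pure source/target situation (it is used again in Appendix A.2 and connects to graphical domination from \cite{fp-paper}).

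Two small points to tighten. First, in part (i) you correctly note $j\neq k$; this is automatic for $|\sigma|\geq 2$, and for $|\sigma|=1$ the hypothesis of (i) must be read as requiring a distinct source and target (the paper's own proof makes the same implicit assumption via the hypothesis $j\neq k$ of Lemma~\ref{lemma:source-target-domination}), since a singleton is always permitted. Second, your intermediate bounds $\sum_{i\in\sigma}x_i^\sigma<\theta$ (part (ii)) and $\sum_{i\in\sigma}x_i^\sigma>\theta$ (part (iii)) are equalities, not strict inequalities, when $\sigma$ is a singleton -- a case that genuinely occurs (it is exactly how the paper obtains Corollary~\ref{cor:singleton}). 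Your conclusions survive because the adjacent inequality in each chain, coming from $W_{ki}>-1$ or $W_{mi}<-1$ at the external node, is strict; so the fix is simply to state these bounds with $\leq$ and $\geq$ and let the external node's strict inequality carry the argument.
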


We prove Proposition~\ref{prop:target-source} in the following section by appealing to a new concept known as \emph{strong domination}. We will then use it in the proofs of Theorem~\ref{thm:flexible-motifs}.

Proposition~\ref{prop:target-source} has several important corollaries. Taking $\sigma = [n]$, part (i) immediately yields the following result.

\begin{corollary}\label{cor:source-target}
If $G$ contains both a source and a target, then $G$ is an invariant forbidden motif.
\end{corollary}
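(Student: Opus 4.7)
The plan is to specialize Proposition~\ref{prop:target-source}(i) to the case $\sigma = [n]$. First I will observe that when $\sigma = [n]$, the induced subgraph $G|_\sigma$ equals $G$ itself, so the notions of ``source of $G$'' and ``target of $G$'' from the statement of the corollary coincide with ``internal source of $G|_{[n]}$'' and ``internal target of $G|_{[n]}$'' from the definition preceding Proposition~\ref{prop:target-source}. (There are no external nodes, so the external/internal distinction is vacuous.) Thus the hypothesis that $G$ contains both a source and a target is exactly the hypothesis of Proposition~\ref{prop:target-source}(i) with $\sigma = [n]$.

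Next I will invoke Proposition~\ref{prop:target-source}(i), which immediately yields $[n] \notin \FP(W)$ for every $W$ with $G_W = G$. By the definition of invariant forbidden motif, this is precisely what is required to conclude that $G$ is invariant forbidden.

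There is essentially no obstacle here, since all of the real work has been deferred to Proposition~\ref{prop:target-source}(i). The only point deserving a sentence of care in the write-up is the verification that the two definitions of ``source'' and ``target'' align in the case $\sigma = [n]$; beyond that the argument is a one-line appeal to the proposition.
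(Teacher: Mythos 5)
Your proposal is correct and matches the paper's argument exactly: the paper also obtains this corollary by taking $\sigma = [n]$ in Proposition~\ref{prop:target-source}(i), so that the source and target are internal and $[n] \notin \FP(W)$ for every $W$ with $G_W = G$. Your extra remark checking that ``source/target of $G$'' coincides with ``internal source/target of $G|_{[n]}$'' is a fine point of care but is left implicit in the paper.
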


Since cliques always contain an internal target, we obtain another corollary when we combine part (iii) of Proposition~\ref{prop:target-source} with the previous finding that the $2$-clique is invariant permitted \cite[Appendix A.4]{fp-paper}.

\begin{corollary}\label{cor:cliques}
Let $G$ be a graph on $n$ nodes and $\sigma \subsetneq [n]$.  If $G|_\sigma$ is a clique with no outgoing edges in $G$, then for any $W$ with graph $G_W=G$, 
$$\sigma \in \FP(W) \ \Leftrightarrow \ \sigma \in \FP(W_\sigma).$$
In particular, if $\sigma$ is the $2$-clique, then $\sigma \in \FP(W)$ for every $W$ with graph $G_W=G$.
\end{corollary}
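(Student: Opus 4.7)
The plan is to deduce this corollary almost immediately from part (iii) of Proposition~\ref{prop:target-source} together with the previously established fact that the $2$-clique is invariant permitted. The first step is to observe that a clique has every one of its nodes as an internal target: since a clique consists of all bidirectional edges, every node in $\sigma$ receives an edge from every other node in $\sigma$. Thus, whenever $G|_\sigma$ is a clique, $G|_\sigma$ trivially has an internal target.

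Next, I would unpack what ``$G|_\sigma$ has no outgoing edges in $G$'' means: for every $i \in \sigma$ and $k \notin \sigma$, there is no edge $i \to k$ in $G$, which by definition means that every external $k$ is an external source of $G|_\sigma$. This is precisely the hypothesis of Proposition~\ref{prop:target-source}(iii). Applying that proposition with the internal target furnished by the clique structure immediately yields
$$\sigma \in \FP(W) \; \Leftrightarrow \; \sigma \in \FP(W_\sigma).$$

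For the final ``in particular'' statement, I would specialize to the case where $G|_\sigma$ is the $2$-clique. By the prior result (from \cite[Appendix A.4]{fp-paper}, quoted in the discussion surrounding Figure~\ref{fig:n2-graphs}), the $2$-clique is an invariant permitted motif, so $\sigma \in \FP(W_\sigma)$ holds for every $W_\sigma$ whose induced graph is the $2$-clique. Combining this with the equivalence above gives $\sigma \in \FP(W)$ for every $W$ with $G_W = G$, as required. I do not foresee any substantial obstacle here: the entire argument is essentially a direct specialization of Proposition~\ref{prop:target-source}(iii). The only thing worth flagging is to be careful to verify that the clique assumption genuinely produces an \emph{internal} target (which it does, automatically) and that the ``no outgoing edges'' hypothesis is exactly the external-source condition required by part (iii).
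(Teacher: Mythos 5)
Your proposal is correct and matches the paper's reasoning exactly: the paper also derives this corollary by noting that a clique always contains an internal target, that ``no outgoing edges'' makes every external node an external source, and then applying Proposition~\ref{prop:target-source}(iii), with the invariant permittedness of the $2$-clique from \cite[Appendix A.4]{fp-paper} giving the final claim.
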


Note that a singleton $i$ is trivially both a source and an internal target in its restricted subgraph $G|_{\{i\}}$.  Thus, combining parts (ii) and (iii) of Proposition~\ref{prop:target-source}, we immediately obtain the following characterization of precisely when a singleton is a fixed point support.  (This result was previously shown in \cite[Appendix A.4]{fp-paper}, but by a very different method of proof).  For the next corollary, recall that a \emph{sink} is a node in a graph that has no outgoing edges.  

\begin{corollary}\label{cor:singleton}
Let $G$ be a graph and $i$ a node in $G$.  For any $W$ with graph $G_W=G$, 
$$ \{i \} \in \FP(W) \ \ \Leftrightarrow \ \ i \text{ is a sink in } G.$$
\end{corollary}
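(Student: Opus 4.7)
The plan is to apply parts (ii) and (iii) of Proposition~\ref{prop:target-source} directly to $\sigma = \{i\}$, after first noting a trivial but crucial observation: since $\sigma \setminus i = \emptyset$, the universal quantifier in the definition of target and source is vacuously satisfied, so $i$ is simultaneously both an internal source and an internal target of $G|_{\{i\}}$.

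For the $(\Leftarrow)$ direction, suppose $i$ is a sink in $G$. Then by definition $i$ has no outgoing edges, so every node $k \notin \{i\}$ is an external source of $G|_{\{i\}}$. Since $i$ is also an internal target, part (iii) applies, giving
$$\{i\} \in \FP(W) \iff \{i\} \in \FP(W_{\{i\}}).$$
The right-hand side is immediate: $W_{\{i\}}$ is the $1\times 1$ zero matrix (because $W_{ii}=0$), so the single fixed point equation reduces to $x_i = [\theta]_+ = \theta > 0$, which is positive as required by fixed point condition (1). Hence $\{i\} \in \FP(W)$.

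For the $(\Rightarrow)$ direction, I would argue the contrapositive. Suppose $i$ is not a sink, so there exists some $k \neq i$ with $i \to k$ in $G$. Then $k$ receives an edge from every node of $\sigma = \{i\}$, which makes $k$ an external target of $G|_{\{i\}}$. Together with the fact that $i$ is an internal source, part (ii) of Proposition~\ref{prop:target-source} immediately yields $\{i\} \notin \FP(W)$, completing the proof.

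The argument is essentially a direct invocation of the proposition, with no substantive obstacles; the only point that might warrant a sentence of explanation is the vacuous-quantifier observation that a singleton $i$ serves as both an internal source and an internal target of its own restricted subgraph. Once that is stated, parts (ii) and (iii) split the two directions cleanly, and the computation $\{i\} \in \FP(W_{\{i\}})$ for the singleton case is essentially trivial because the only diagonal entry is $W_{ii}=0$.
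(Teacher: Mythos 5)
Your proposal is correct and follows exactly the paper's intended argument: the paper likewise notes that a singleton is vacuously both a source and an internal target of $G|_{\{i\}}$ and then cites parts (ii) and (iii) of Proposition~\ref{prop:target-source}, with your write-up simply filling in the routine details (the external target/source identification and the trivial check that $\{i\} \in \FP(W_{\{i\}})$ since $W_{ii}=0$).
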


From Corollaries~\ref{cor:cliques} and~\ref{cor:singleton}, we see that a sink and a 2-clique with no outgoing edges will always support fixed points in the full network irrespective of the choice of $W$.  In Appendix A.2, we show that these are the {\it only} motifs that yield surviving fixed points independent of parameters.

%%%%%%%%%%%%%%%%%%%

\subsection{Strong domination and proof of Proposition~\ref{prop:target-source}}\label{sec:strong-domination}

Here we introduce strong domination, a useful tool for ruling in and ruling out fixed point supports in TLNs. 

\begin{definition}[strong domination]\label{def:strong-domination}
Consider a TLN on $n$ nodes with matrix $W,$ let $\sigma \subseteq [n]$ be nonempty, and let $\Wtil \od -I + W$.
For any $j,k \in [n]$, we say that $k$ \emph{strongly dominates} $j$ with respect to $\sigma$, and write $k \gg_\sigma j$, if 
$$\Wtil_{ki} \geq \Wtil_{ji} \quad \text{ for all } i \in \sigma,$$ 
and the inequality is strict for at least one such $i$.  
\end{definition}

Strong domination has some immediate implications for the associated graph of the TLN. To see this, 
recall that $\Wtil_{i\ell} = W_{i\ell}$ whenever $i \neq \ell$, and $\Wtil_{ii} = -1$. Now observe that if $\Wtil_{ki} \geq \Wtil_{ji}$, then $i \to j$ implies $i \to k$ since $ \Wtil_{ji} > -1$ implies $\Wtil_{ki} > -1$. Furthermore, since $\Wtil_{jj} = -1$, the inequality $\Wtil_{kj} \geq \Wtil_{jj}$ implies $j \to k$, while $\Wtil_{kk} \geq \Wtil_{jk}$ implies $k \not\to j$. It follows that if $k \gg_\sigma j$, so that $\Wtil_{ki} \geq \Wtil_{ji}$ for each $i \in \sigma$, then the following conditions on the graph $G = G_W$ must hold:
\begin{enumerate}
\item for all  $i \in \sigma \setminus \{j, k\}$, if $i \to j$ then $i \to k$,
\item if $j \in \sigma$, then $j \to k,$ and
\item if $k \in \sigma$, then $k \not\to j$.
\end{enumerate}
In prior work \cite{fp-paper}, we defined {\it graphical domination} using the above three conditions. Specifically, we say that $k$ {\it graphically dominates} $j$ {\it with respect to} $\sigma$ precisely when conditions 1-3 all hold. Graphical domination is thus a necessary (but not sufficient) condition for strong domination. In the special case where $k$ graphically dominates $j$ because $k$ is a target of $G|_\sigma$ and $j$ is a source, then the corresponding strong domination $k \gg_\sigma j$ is guaranteed to hold.

\begin{lemma}\label{lemma:source-target-domination}
Let $G$ be a graph on $n$ nodes, $\sigma \subseteq [n]$ nonempty, and $j,k \in [n]$ with $j \neq k$. If $j$ is a source of $G|_\sigma$ (internal or external), and $k$ is a target of $G|_\sigma$ (internal or external), then $k \gg_\sigma j$.
\end{lemma}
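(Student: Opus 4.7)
The plan is to unfold the definition of strong domination and verify the inequality $\Wtil_{ki} \geq \Wtil_{ji}$ at every $i \in \sigma$ by a short case analysis, with strictness falling out automatically from the source/target hypotheses.

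First I would translate the hypotheses into numerical inequalities on $W$. The assumption that $j$ is a source of $G|_\sigma$ means $i \not\to j$ for every $i \in \sigma \setminus \{j\}$, i.e.\ $W_{ji} < -1$, and hence $\Wtil_{ji} < -1$ for such $i$ (using $\Wtil_{ji} = W_{ji}$ when $i \neq j$). Similarly, $k$ being a target of $G|_\sigma$ gives $W_{ki} > -1$, hence $\Wtil_{ki} > -1$, for every $i \in \sigma \setminus \{k\}$.

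Next I would fix an arbitrary $i \in \sigma$ and split into three cases. If $i \notin \{j,k\}$, both of the inequalities above apply directly and yield $\Wtil_{ki} > -1 > \Wtil_{ji}$. If $i = j$ (which can occur only when $j$ is an internal source), then $\Wtil_{ji} = \Wtil_{jj} = -1$; on the other hand, $j \in \sigma \setminus \{k\}$ so the target condition forces $j \to k$, giving $\Wtil_{kj} > -1$. If $i = k$ (which can occur only when $k$ is an internal target), then $\Wtil_{ki} = \Wtil_{kk} = -1$; and $k \in \sigma \setminus \{j\}$, so the source condition forces $k \not\to j$, giving $\Wtil_{jk} < -1$. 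In all three cases the inequality $\Wtil_{ki} > \Wtil_{ji}$ is strict. Since $\sigma$ is nonempty we can exhibit at least one $i$ for which strictness holds, and so $k \gg_\sigma j$ by Definition~\ref{def:strong-domination}.

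The argument is essentially bookkeeping: the only place where any care is required is the handling of the internal/external distinction for $j$ and $k$. In particular one has to notice that the problematic values $i=j$ and $i=k$ only arise when the corresponding node is internal, which is exactly when the source/target hypothesis supplies the extra edge (or non-edge) to $k$ (resp.\ from $k$) needed to close the inequality. I don't anticipate any genuine obstacle beyond keeping this case split clean.
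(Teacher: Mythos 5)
Your proof is correct and follows essentially the same route as the paper: unpack the source/target hypotheses into inequalities on $\Wtil$, observe that the diagonal entries equal $-1$, and handle the indices $i=j$ and $i=k$ via the edge $j\to k$ (resp.\ non-edge $k\not\to j$) forced by the target (resp.\ source) condition. Your case split even yields strict inequality at every $i\in\sigma$, slightly more than the definition of $k\gg_\sigma j$ requires, whereas the paper first establishes the weak inequalities and then extracts one strict index; the difference is cosmetic.
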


\begin{proof}
If $j$ is a source of $G|_\sigma$, then $W_{ji} < -1$ for all $i \in \sigma\setminus j$, and thus $\Wtil_{ji} \leq -1$ for all $i \in \sigma$. Similarly, if $k$ is a target of $G|_\sigma$, then $W_{ki} > -1$ for all $i \in \sigma\setminus k$, and thus $\Wtil_{ki} \geq -1$ for all $i \in \sigma$. If follows that $\Wtil_{ki} \geq \Wtil_{ji}$ for all $i \in \sigma$. Note, moreover, that the inequality is strict for any $i \in \sigma \setminus \{j,k\}$. If no such $i$ exists, then at least one of $j,k \in \sigma$, since $\sigma$ is nonempty. The strict inequality then follows from choosing $i = j$ (if $j \in \sigma$) or $i = k$ (if $k \in \sigma$). We conclude that in all cases $k \gg_\sigma j$.
\end{proof}

In prior work \cite{fp-paper}, we completely characterized fixed points of competitive TLNs in terms of a related, but more difficult to check, notion called \emph{general domination}.  In Appendix A.1, we show that the presence of strong domination implies general domination, and thus we obtain the following lemma as an immediate consequence of Lemma~\ref{lemma:domination} in Appendix A.1. This will be our key tool for ruling in and ruling out various fixed point supports.

\begin{lemma} \label{lemma:strong-domination}
Suppose $k \gg_\sigma j$. Then the following statements all hold:
\begin{itemize}
\item[(i)] If $j,k \in \sigma$, then $\sigma \notin \FP(W_\sigma)$, and thus $\sigma \notin \FP(W)$.
\item[(ii)] If $j \in \sigma$ and $k \notin \sigma$, then $\sigma \notin \FP(W_{\sigma \cup k})$, and thus $\sigma \notin \FP(W)$.
\item[(iii)] If $k \in \sigma$ and $j \notin \sigma$, then $\sigma \in \FP(W_{\sigma \cup j}) \; \Leftrightarrow \; \sigma \in \FP(W_\sigma)$.
\item[(iv)] If $j,k \notin \sigma$, then $\sigma \in \FP(W_{\sigma \cup k}) \Rightarrow \sigma \in \FP(W_{\sigma \cup j}).$
\end{itemize}
\end{lemma}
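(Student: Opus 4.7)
The plan is to package both the fixed point equation (for $\ell \in \sigma$) and the survival inequality (for $\ell \notin \sigma$) into a single scalar quantity, and then read off all four cases from one clean comparison. For any node $\ell \in [n]$, I will set
\[
s_\ell(\sigma) \;\od\; \sum_{i \in \sigma} \Wtil_{\ell i}\, x^\sigma_i \;+\; \theta,
\]
where $x^\sigma$ is as in~\eqref{eq:xsigma} and $\Wtil = -I + W$. A direct rewriting of the fixed point equation in terms of $\Wtil$ shows $s_\ell(\sigma) = 0$ for every $\ell \in \sigma$; and since $\Wtil_{\ell i} = W_{\ell i}$ whenever $\ell \notin \sigma$ and $i \in \sigma$, fixed point condition $(2)$ of Section~\ref{sec:background} is exactly $s_\ell(\sigma) \le 0$.

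The key step is that strong domination is designed precisely to compare $s_k$ with $s_j$. Under fixed point condition $(1)$, every $x^\sigma_i > 0$, so
\[
s_k(\sigma) - s_j(\sigma) \;=\; \sum_{i \in \sigma}\bigl(\Wtil_{ki} - \Wtil_{ji}\bigr)\, x^\sigma_i \;>\; 0,
\]
using that $k \gg_\sigma j$ makes every summand nonnegative and at least one strictly positive.

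This single strict inequality handles all four parts. In (i), if $\sigma \in \FP(W_\sigma)$ then both $s_j(\sigma)$ and $s_k(\sigma)$ would vanish, contradicting $s_k > s_j$; hence $\sigma \notin \FP(W_\sigma)$, and $\sigma \notin \FP(W)$ by Lemma~\ref{lemma:inheritance}. Case (ii) is the same idea: $s_j(\sigma) = 0$ (as $j \in \sigma$) together with the survival of $k$ would force $s_k(\sigma) \le 0$, again contradicting $s_k > s_j$. In (iii), $k \in \sigma$ gives $s_k(\sigma) = 0$ and hence $s_j(\sigma) < 0$, so if $\sigma \in \FP(W_\sigma)$ the external node $j$ automatically satisfies the survival bound; the equivalence then follows from Lemma~\ref{lemma:inheritance}. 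In (iv), surviving $k$ gives $s_k(\sigma) \le 0$, whence $s_j(\sigma) < 0$, so $\sigma$ also survives $j$.

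I do not expect genuine obstacles. The only real calculation is the (immediate) translation of the two fixed point conditions into the identities $s_\ell(\sigma) = 0$ for $\ell \in \sigma$ and $s_\ell(\sigma) \le 0$ for $\ell \notin \sigma$, after which the argument reduces to the one-line inequality above. In spirit this is exactly the comment in the text that strong domination implies the more technical general domination of~\cite{fp-paper}, so one could alternatively deduce the lemma from Lemma~\ref{lemma:domination} in Appendix A.1; but the direct argument is short enough to be worth writing out explicitly.
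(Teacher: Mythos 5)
Your proof is correct, and it takes a genuinely different route from the paper. The paper proves this lemma by observing that strong domination implies the \emph{general domination} relation $k >_\sigma j$ of Definition~\ref{def:general-domination} (since nondegeneracy makes every $|s_i^\sigma|>0$), and then invoking Lemma~\ref{lemma:domination} of Appendix A.1, whose four parts are in turn unpacked from Theorem~\ref{thm:domination}, the black-box characterization of permitted and surviving motifs imported from \cite{fp-paper}. You instead work directly with the elementary fixed point conditions (1) and (2) of Section~\ref{sec:background}: your quantity $s_\ell(\sigma)=\sum_{i\in\sigma}\Wtil_{\ell i}x_i^\sigma+\theta$ vanishes identically for $\ell\in\sigma$ (by the defining equation $(I-W_\sigma)x^\sigma=\theta\mathbf{1}_\sigma$, using $W_{\ell\ell}=0$), equals the survival expression for $\ell\notin\sigma$, and under condition (1) the strict inequality $s_k(\sigma)>s_j(\sigma)$ follows from $k\gg_\sigma j$; all four cases then fall out exactly as you say, with Lemma~\ref{lemma:inheritance} supplying the ``thus $\sigma\notin\FP(W)$'' conclusions and the forward direction of (iii). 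In fact your $s_\ell(\sigma)$ is an affine rescaling of the paper's weighted sums $w_\ell^\sigma$ (when condition (1) holds, $|s_i^\sigma|=x_i^\sigma\,|\det(I-W_\sigma)|$, so $w_\ell^\sigma=|\det(I-W_\sigma)|\,(s_\ell(\sigma)-\theta)$), so the two arguments are morally parallel; what your version buys is self-containment — it needs only the explicit formula~\eqref{eq:xsigma} and the two fixed point conditions, with no appeal to Theorem 15 of \cite{fp-paper} — while the paper's version buys uniformity, since the general-domination framework also covers situations where the componentwise inequalities of strong domination fail but the weighted comparison still holds. The only points worth making explicit when writing this up are that nondegeneracy guarantees $x^\sigma$ is defined, and that in each case the strict inequality $s_k>s_j$ is only available once the relevant membership hypothesis supplies condition (1); your write-up already uses this correctly.
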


\noindent Putting together Lemmas~\ref{lemma:source-target-domination} and~\ref{lemma:strong-domination}, we can now easily prove Proposition~\ref{prop:target-source}.

\begin{proof}[{\bf Proof of Proposition~\ref{prop:target-source}}]
In each part (i)-(iv), let $j$ be the source and $k$ the target of $G|_\sigma$. By Lemma~\ref{lemma:source-target-domination}, it follows that $k \gg_\sigma j$ in each part, and the four hypotheses precisely match those in Lemma~\ref{lemma:strong-domination}. Thus the conclusions in each part (i)-(iv) follow immediately from those of Lemma~\ref{lemma:strong-domination}.
\end{proof}

%%%%%%%%%%%%%%%%%%%
\subsection{Proof of Theorem~\ref{thm:flexible-motifs}}\label{sec:inv-proofs}

In this section, we collect the remaining results necessary to complete the proof of Theorem~\ref{thm:flexible-motifs}.  We begin by proving the characterization of invariant forbidden motifs (part 2 of this theorem).  Proposition~\ref{prop:target-source} established that if $G$ contains both a source and a target, then it is an invariant forbidden motif. This establishes the backwards direction of part 2 of Theorem~\ref{thm:flexible-motifs}. The forward direction follows immediately from the following proposition, which gives an explicit construction for a $W$ with a full-support fixed point whenever $G$ does not contain both a source and a target.  

\begin{proposition}\label{prop:fix-pt-construction}
Let $G$ have size $n>1$. If $G$ does \underline{not} contain both a source and a target, then there exists a $W$ with graph $G_W = G$ for which $\FP(W)$ has a full-support fixed point. In particular, $G$ is not an invariant forbidden motif.
\end{proposition}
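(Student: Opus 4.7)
My plan is to construct explicitly, for any $G$ satisfying the hypothesis, a competitive matrix $W$ with $G_W = G$ for which the uniform vector $x^* = \mathbf{1}_n$ is a full-support fixed point of the TLN $(W,\theta)$ for a carefully chosen $\theta>0$. The key reduction is that $(I-W)\mathbf{1} = \theta\mathbf{1}$ is equivalent to every off-diagonal row sum $\sum_{j\neq i} W_{ij}$ equaling the common value $1-\theta$. Hence the task reduces to selecting $\theta$ and entries $W_{ij}$ such that (a) every off-diagonal row sum equals $1-\theta$, and (b) $W_{ii}=0$, $W_{ij}<0$ for $i\neq j$, and $W_{ij}>-1$ if and only if $j\to i$ in $G$.

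For a node $i$ with in-degree $d_i$, the off-diagonal part of row $i$ consists of $d_i$ entries constrained to $(-1,0)$ (edges) and $n-1-d_i$ entries constrained to $(-\infty,-1)$ (non-edges). A short inspection shows that the set of achievable row sums is the open interval $(-(n-1),0)$ when $d_i=n-1$ (target row) and $(-\infty, -(n-1-d_i))$ otherwise. The hypothesis that $G$ does not contain both a source and a target is exactly what allows a common value $1-\theta$ to lie in every row's achievability range: if $G$ has no source, then all $d_i\geq 1$, and any $\theta\in(n-1,n)$ works for every row; if $G$ has no target, then all $d_i\leq n-2$, and any $\theta>n$ works. (If $G$ had both, we would need $\theta>n$ and $\theta<n$ simultaneously, matching the impossibility in Corollary~\ref{cor:source-target}.)

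Having fixed such $\theta$, I would build $W$ row by row. For each non-target row $i$, set all $d_i$ edge entries to $-\epsilon$ with $\epsilon>0$ small, and set all $n-1-d_i$ non-edge entries equal to $b_i \od (1-\theta+d_i\epsilon)/(n-1-d_i)$; a direct computation shows $b_i<-1$ for $\epsilon$ sufficiently small under the chosen $\theta$. For each target row $i$ (which arises only in the no-source case, where $\theta\in(1,n)$), set all $n-1$ off-diagonal entries equal to $(1-\theta)/(n-1)$, which lies in $(-1,0)$. By construction each row of $W$ sums to $1-\theta$, the sign pattern constraints are met so $G_W=G$, and $x^*=\mathbf{1}$ is a positive fixed point, giving $[n]\in\FP(W)$. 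If the resulting $W$ happens to be degenerate, an arbitrarily small generic perturbation of the entries preserves $G_W$, the strict sign constraints, and the positivity of $x^*$, while restoring nondegeneracy. The main step is the achievability analysis leading to the case split on $\theta$; once the correct range is identified, the remainder is routine verification.
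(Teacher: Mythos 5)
Your construction is correct and is essentially the paper's own argument: both reduce the existence of a full-support fixed point to making $I-W$ have constant, positive row sums compatible with the sign pattern of $G$, and both split into the no-target case (row sum above $n$, i.e.\ your $\theta>n$, the paper's $r>0$) and the no-source case (row sum below $n$, your $\theta\in(n-1,n)$, the paper's $r<0$). Fixing $x^*=\mathbf{1}$ and tuning $\theta$ is equivalent, by the $\theta$-invariance of $\FP(W)$, to the paper's choice of fixing $\theta$ and taking the uniform fixed point $\tfrac{\theta}{n+r}\mathbf{1}$, and your row-by-row entry assignment plays the same role as the paper's matrix $A$ with row sums $r$.
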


\begin{proof}
Let $n$ be the number of nodes in $G$, and $r$ a real number that satisfies $0<|r|<n$. We will construct $W$ with graph $G$ such that 
$$x = \dfrac{\theta}{n+r}\left(\begin{array}{c}1 \\ \vdots \\1\end{array}\right) = \dfrac{\theta}{n+r} {\bf 1},$$
is a fixed point of the associated TLN. (Note that the entries of $x$ are all strictly positive, and thus it has full support.) For $x$ to be a fixed point, we must have $(I-W)x = \theta {\bf 1}$, so $I-W$ must have uniform row sums equal to $R\od n+r.$
Equivalently, since all the off-diagonal entries of $W$ have the form $-1 \pm a_{ij}$, we can write $I-W$ as $11^T+A$, where the matrix $A$ has $A_{ii}=0$ along the diagonal; then, we must construct $A$ with all row sums equal to $r$. If this $A$ is compatible with $G$, then $x$ will be a full-support fixed point for the corresponding $W$, and we are done.

Note that for $A$ to be compatible with $G$, the off-diagonal entries need to satisfy $A_{ij} < 0$ if $j \to i$, and $A_{ij}>0$ if $j \not\to i$. We show how to construct $A$ in two separate cases which cover all graphs that do not contain both a source and a target. (If $G$ contains neither a source nor a target, then both constructions work.)

\underline{Case 1}: G has no targets. Pick $r>0$. For each row $i$, there exists $k$ such that $k \not\to i$ since $i$ is not a target, and thus $A_{ik}>0$. For any $j \neq i,k$, let
$$A_{ij} = \left\{\begin{array}{cc} -r/n & \text{if} \; j \to i,\\ \phantom{-}r/n & \text{if} \; j \not\to i. \end{array}\right.$$
Now observe that $-r < \sum_{j \neq k} A_{ij} < r,$ so we can choose $A_{ik} = r-\sum_{j \neq k} A_{ij}>0.$ It follows that $\sum_{j=1}^n A_{ij} = r$ for each row $i$, as desired.

\underline{Case 2}: G has no sources. Pick $r < 0$. For each row $i$ there exists $k$ such that $k \to i$ since $i$ is not a source, and thus $A_{ik}<0$.
For any $j \neq i,k$, let
$$A_{ij} = \left\{\begin{array}{cc} \phantom{-}r/n & \text{if} \; j \to i,\\ -r/n & \text{if} \; j \not\to i. \end{array}\right.$$
Observe that $r < \sum_{j \neq k} A_{ij} < -r,$ so we can choose $A_{ik} = r-\sum_{j \neq k} A_{ij}<0.$ Again, it follows that $\sum_{j=1}^n A_{ij} = r$ for each row $i$, as desired.
\end{proof}

We can now immediately prove the following proposition characterizing invariant forbidden motifs (this is part 2 of Theorem~\ref{thm:flexible-motifs}).

\begin{proposition}\label{prop:inv-forbidden}
A graph $G$ is an invariant forbidden motif if and only if $G$ contains both a source and a target, and $G$ is not the singleton.
\end{proposition}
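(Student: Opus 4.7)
The plan is to prove this proposition as a direct combination of two results already established in the paper: Corollary~\ref{cor:source-target}, which handles the backward implication, and Proposition~\ref{prop:fix-pt-construction}, which handles the forward implication via its contrapositive. Both of the heavy lifts have been done; the remaining task is to stitch them together and deal with the singleton as an edge case.

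For the backward direction, I would assume that $G$ contains both a source and a target and that $G$ is not the singleton. The hypothesis that $G$ is not the singleton is exactly what guarantees $n \geq 2$, which in turn ensures the source $s$ and target $t$ are distinct nodes of $G$. With $s \neq t$, Corollary~\ref{cor:source-target} (coming from part (i) of Proposition~\ref{prop:target-source} applied to $\sigma = [n]$, using the pair $s, t$) applies directly to give $[n] \notin \FP(W)$ for every $W$ with $G_W = G$. Thus $G$ is an invariant forbidden motif.

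For the forward direction I would argue by contrapositive: if either $G$ is the singleton or $G$ fails to contain both a source and a target, then $G$ is not invariant forbidden. In the singleton case, Corollary~\ref{cor:singleton} (or direct inspection) yields $\{1\} \in \FP(W)$ for every compatible $W$, so $[n] \in \FP(W)$ and $G$ is not invariant forbidden. Otherwise $n > 1$ and $G$ lacks either a source or a target, at which point Proposition~\ref{prop:fix-pt-construction} hands us an explicit $W$ with $G_W = G$ possessing a full-support fixed point, which again shows $G$ is not invariant forbidden. There is no genuine obstacle here since both auxiliary results are in hand; the only subtlety worth flagging is the asymmetry in how the singleton behaves. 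Although the lone node of a singleton is vacuously both a source and a target, the pair needed to invoke strong domination in Corollary~\ref{cor:source-target} requires $j \neq k$, so the singleton slips through. This is precisely why the characterization must explicitly exclude it.
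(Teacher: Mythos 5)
Your proof is correct and follows essentially the same route as the paper: the backward direction via Corollary~\ref{cor:source-target} (i.e.\ part (i) of Proposition~\ref{prop:target-source}), the forward direction by contrapositive via Proposition~\ref{prop:fix-pt-construction}, and the singleton handled separately as the invariant permitted edge case. Your added remark that excluding the singleton is what guarantees a source--target pair with $s \neq t$ (as required for the strong-domination argument) is a correct and worthwhile clarification, but it does not change the argument.
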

\begin{proof}
Recall that the singleton graph ($n=1$) is trivially invariant permitted, since the $1 \times 1$ matrix $I-W=[1]$ always has $x=\theta$ as a fixed point solution. So if $G$ is invariant forbidden, it must have size $n>1$. The forward direction now follows immediately from Proposition~\ref{prop:fix-pt-construction}, above. The backwards direction was given in Corollary~\ref{cor:source-target}, which followed directly from part (i) of Proposition~\ref{prop:target-source}.
\end{proof}

We can also use the ideas in the proof of Proposition~\ref{prop:fix-pt-construction} to show when a graph $G$ cannot be an invariant permitted motif. Specifically, since there can be only one solution to the (nondegenerate) linear system $(I-W)x = \theta {\bf 1}$, if we can find a $W$ compatible with $G$ such that $x$ does {\it not} lie within the positive orthant, then it follows that this $W$ does not have a full-support fixed point, and thus $G$ cannot be invariant permitted. This strategy allows us to rule out invariant permitted motifs in all but a handful of special cases.

To state the next proposition, we need to introduce the notion of the {\it node-complement} of a graph. Given a directed graph $G$ and a choice of node $i$, define $\widehat{G}(i)$ as the graph obtained from $G$ by requiring that $i \to j$ in $\widehat{G}(i)$ if and only if $i \not\to j$ in $G$. In other words, we complement all outgoing edges of $i$ and leave the rest of the graph intact. This has the effect of changing the sign of all the entries in column $i$ of the associated $A$ matrix (defined in the proof of Proposition~\ref{prop:fix-pt-construction} above).

\begin{proposition}\label{prop:node-complement}
Suppose there exists a node $i$ of $G$ such that $\widehat{G}(i)$ does not contain both a source and a target. Then there exists a $W$ with graph $G_W = G$ for which $\FP(W)$ does \underline{not} have a full-support fixed point. In particular, $G$ is not an invariant permitted motif.
\end{proposition}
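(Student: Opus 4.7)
The plan is to apply Proposition~\ref{prop:fix-pt-construction} to $\widehat{G}(i)$ and then ``flip'' the $i$-th column of the resulting matrix to produce a $W$ compatible with $G$ whose candidate full-support fixed-point vector has a strictly negative coordinate.

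First, since $\widehat{G}(i)$ does not contain both a source and a target, Proposition~\ref{prop:fix-pt-construction} yields a matrix $A'$ compatible with $\widehat{G}(i)$ with uniform row sums $A'\mathbf{1}=r\mathbf{1}$, for some nonzero $r$ with $|r|<n$; by choosing $|r|$ small, we can additionally make all entries of $A'$ small in magnitude. I would then define $A$ to agree with $A'$ in every column except the $i$-th, where the sign is flipped: $A_{ki}=-A'_{ki}$ for every $k$. Because $\widehat{G}(i)$ differs from $G$ only in the outgoing edges of $i$, which correspond exactly to the entries of column $i$ in the matrix, the sign pattern of $A$ is compatible with $G$. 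Setting $I-W\od 11^T+A$ as in the proof of Proposition~\ref{prop:fix-pt-construction} then gives a matrix $W$ with $G_W=G$, which is competitive whenever $|r|$ is sufficiently small.

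The key computation is solving the linear system $(I-W)v=\mathbf{1}$, since any full-support fixed point of $W$ with external drive $\theta$ must be of the form $\theta v$. My ansatz is $v=\alpha(\mathbf{1}-2e_i)$. A direct check, using that $A$ agrees with $A'$ off column $i$ while the $i$-th column is negated, gives $A(\mathbf{1}-2e_i)=A'\mathbf{1}=r\mathbf{1}$; combined with $11^T v=\alpha(n-2)\mathbf{1}$, this produces $(I-W)v=\alpha(n-2+r)\mathbf{1}$. Choosing $\alpha=1/(n-2+r)$ then yields
\[
v \;=\; \frac{1}{n-2+r}\bigl(\mathbf{1}-2e_i\bigr),
\]
whose $i$-th coordinate has the opposite sign from all of its other coordinates. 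Thus $v$ always has a negative entry, so the candidate full-support fixed point $\theta v$ cannot actually be a valid fixed point, and hence $[n]\notin\FP(W)$. In particular, $G$ is not an invariant permitted motif.

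The main obstacle is ensuring that $W$ is nondegenerate (and, along the way, that $n-2+r\neq 0$). The latter requires only avoiding the single value $r=-(n-2)$. For the former, the matrix determinant lemma gives $\det(I-W)=\det(A)\cdot(n-2+r)/r$ whenever $A$ is invertible, so a generic choice of $A'$ within the class of matrices compatible with $\widehat{G}(i)$ having row sums $r$ guarantees $\det(I-W)\neq 0$. Any remaining nondegeneracy conditions on smaller principal submatrices $W_\sigma$ can then be arranged by a sufficiently small further perturbation of $W$ that preserves both $G_W=G$ and the strict negativity of the $i$-th coordinate of $v$.
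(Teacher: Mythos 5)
Your proposal is correct and follows essentially the same route as the paper: apply Proposition~\ref{prop:fix-pt-construction} to $\widehat{G}(i)$, flip the signs of column $i$ to obtain $A$ compatible with $G$, and observe that the unique solution of $(I-W)x=\theta\mathbf{1}$ is proportional to $\mathbf{1}-2e_i$, hence not in the positive orthant, so $[n]\notin\FP(W)$. Your extra remarks on avoiding $r=-(n-2)$ and arranging nondegeneracy by a small perturbation are sound refinements of details the paper leaves implicit.
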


\begin{proof}
First observe that if $\widehat{G}(i)$ does not contain both a source and a target, then by Proposition~\ref{prop:fix-pt-construction} there exists a $\widehat{W}$ compatible with $\widehat{G}(i)$, such that the associated TLN has a full-support fixed point of the form $\hat{x} = \dfrac{\theta}{n+r}{\bf 1}$. In particular, using $\widehat{A}$ defined via 
$I-\widehat{W} = 11^T + \widehat{A}$, we have $(11^T + \widehat{A})\hat{x} = \theta {\bf 1}.$ Now consider $W$ and $A$ obtained from  $\widehat{W}$ and $\widehat{A}$ by flipping the signs of all entries in column $i$ of $\widehat{A},$ and a vector $y$ obtained by flipping the sign of the $i$th entry of the all-ones vector, ${\bf 1}$, to $-1$. Note that these sign flips align, so that $Ay = \widehat{A}{\bf 1} = r{\bf 1}$, and thus
$$(11^T+A)y = (n-2){\bf 1} + Ay = (n-2+r){\bf 1}.$$
It is thus clear that the vector $x = \dfrac{\theta}{n-2+r}y$ satisfies $(I-W)x = \theta{\bf 1}$, where $W$ is compatible with $G$. Since $x$ does not lie in the positive orthant (the $i$th entry is negative), it follows that the TLN for $W$ does not have a full-support fixed point.
\end{proof}

Proposition~\ref{prop:node-complement} implies that in order for a graph $G$ to be invariant permitted, every node-complement graph $\widehat{G}(i),$ for $i \in [n]$, must contain both a source and a target. As we see in the next lemma, this requirement is incredibly restrictive.

\begin{lemma}\label{lemma:n3}
Suppose $G$ is a graph on $n$ nodes such that for each $i \in [n]$, the node-complement graph $\widehat{G}(i)$ contains both a source and a target. Then $n \leq 2$ or $G$ is the $3$-cycle. 
\end{lemma}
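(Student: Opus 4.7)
The plan is to translate the hypothesis into a combinatorial covering identity on in-degrees of $G$, and then use the pairwise disjointness of four natural subsets to eliminate the case $n \geq 4$ and reduce the case $n = 3$ to a classification of derangements of $\{1,2,3\}$.

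Write $d^-(j)$ for the in-degree of $j$ in $G$, and define
$$T = \{j : d^-(j) = n-1\},\; T' = \{j : d^-(j) = n-2\},\; S = \{j : d^-(j) = 0\},\; S' = \{j : d^-(j) = 1\}.$$
For $j \in T'$ let $t(j) \neq j$ be the unique node with $t(j) \not\to j$, and for $j \in S'$ let $s(j)$ be its unique in-neighbor. Since the node-complement $\widehat{G}(i)$ flips precisely the outgoing edges of $i$, we have $d^-_{\widehat{G}(i)}(i) = d^-_G(i)$ while $d^-_{\widehat{G}(i)}(j) = d^-_G(j) \pm 1$ for each $j \neq i$ (with the sign determined by whether $i \to j$ in $G$). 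A direct check then shows that $\widehat{G}(i)$ contains a target if and only if $i \in T$ or $i = t(j)$ for some $j \in T'$, and contains a source if and only if $i \in S$ or $i = s(j)$ for some $j \in S'$. Thus the hypothesis of the lemma is equivalent to the covering identities
$$[n] = T \cup \{t(j) : j \in T'\} \qquad\text{and}\qquad [n] = S \cup \{s(j) : j \in S'\}. \qquad(\ast)$$

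Since each of the two covering sets has at most $|T'|$ (resp.\ $|S'|$) elements, $(\ast)$ yields $|T| + |T'| \geq n$ and $|S| + |S'| \geq n$. When $n \geq 4$, the in-degree values $n-1, n-2, 1, 0$ are pairwise distinct, so $T, T', S, S'$ are pairwise disjoint subsets of $[n]$; hence $|T| + |T'| + |S| + |S'| \leq n$. Combined with the two lower bounds, this gives $2n \leq n$, a contradiction.

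For $n = 3$ we have $T' = S'$, while $T, T', S$ are still pairwise disjoint of total size at most $3$. The inequalities $|T| + |T'| \geq 3$ and $|S| + |T'| \geq 3$, together with this upper bound, force $|T| = |S| = 0$ and $|T'| = 3$, i.e.\ every node of $G$ has in-degree exactly $1$. Then $(\ast)$ asserts that $j \mapsto s(j)$ is a surjection of $\{1,2,3\}$ onto itself, and since $s(j) \neq j$ it is a fixed-point-free bijection --- that is, a derangement of $\{1,2,3\}$. The edge set of $G$ is $\{s(j) \to j : j \in \{1,2,3\}\}$, and both derangements of $\{1,2,3\}$ produce precisely a directed 3-cycle, completing the proof. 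The main obstacle is the reformulation $(\ast)$; once that is in place, the remainder is elementary counting.
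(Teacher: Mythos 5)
Your proof is correct and follows essentially the same route as the paper: both rest on the observation that a target (resp.\ source) of $\widehat{G}(i)$ must be a node of in-degree at least $n-2$ (resp.\ at most $1$) in $G$ whose unique missing (resp.\ unique present) incoming edge involves $i$, followed by a counting argument that kills $n \geq 4$ and forces all in-degrees to equal $1$ when $n=3$. The only real difference is the endgame: the paper enumerates the two $n=3$ graphs with all in-degrees $1$ and checks their node-complements directly, while your surjectivity-of-$s$ (derangement) argument rules out the $2$-clique-with-outgoing-edge automatically---a slightly slicker finish to the same argument.
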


\begin{proof}
Suppose $\widehat{G}(i)$ contains both a source and a target for each $i \in [n]$. First, observe that any node $k$ can be a target in at most one of the $\widehat{G}(i)$ graphs. To see this, observe that if $k$ is a target in $\widehat{G}(i)$ then either $i = k$ or $i \to k$ in $\widehat{G}(i)$. In the first case, $k$ is a target in all of $G$ and thus $j \not\to k$ in any other $\widehat{G}(j)$. In the second case, $i \not\to k$ in $G$ and thus $i \not\to k$ in any $\widehat{G}(j)$ for $j \neq i$. Since there are $n$ node-complement graphs, it follows that each $k \in [n]$ must be a target in exactly one of them. In particular, this means the in-degree of each node of $G$ is at least $n-2$. By a similar argument, we can conclude that every node $k$ must also be a source in exactly one of the $\widehat{G}(i)$, and thus the in-degree of each node is at most $1$. Therefore, each node must have in-degree $d_{\mathrm{in}}$ that satisfies $n-2 \leq d_{\mathrm{in}} \leq 1$, which is only possible if $n \leq 3$.

Among the 16 non-isomorphic $n=3$ graphs, there are only two graphs that have $ d_{\mathrm{in}} =n-2=1$ for all nodes in the graph, namely the $3$-cycle and the graph consisting of a $2$-clique with an outgoing edge to the third node. For the $3$-cycle, it is straightforward to check that every node-complement graph does indeed have a source and a target.  For the $2$-clique with the outgoing edge, however, complementing with respect to the third node yields a graph with no source.  Thus, if $G$ has the property that every node-complement graph contains a source and a target, then $G$ must be the $3$-cycle or have size $n \leq 2$.  
\end{proof}

We can now prove Theorem~\ref{thm:flexible-motifs}.

\begin{proof}[{\bf Proof of Theorem~\ref{thm:flexible-motifs}}]
Recall that part 2 of Theorem~\ref{thm:flexible-motifs} characterizing invariant forbidden motifs was already proven in Proposition~\ref{prop:inv-forbidden}.  To complete the proof of Theorem~\ref{thm:flexible-motifs}, we will next prove part 3 characterizing invariant permitted motifs, and then end with the proof of part 1 on flexible motifs.  

Putting together Lemma~\ref{lemma:n3} with Proposition~\ref{prop:node-complement}, we see that the only possible invariant permitted motifs are graphs of size $n \leq 2$ and the $3$-cycle. In \cite[Appendix A.4]{fp-paper}, $\FP(W)$ was determined for all graphs of size $n \leq 2$ (see Figure~\ref{fig:n2-graphs}).  Among these graphs, the invariant permitted motifs are precisely the singleton (panel A of Figure~\ref{fig:n2-graphs}), the independent set of size $2$ (panel B) and the $2$-clique (panel C). It now remains only to show that the $3$-cycle is also invariant permitted.

Let $W$ be any TLN matrix compatible with the $3$-cycle graph $G$. Then by Corollary~\ref{cor:singleton}, $\FP(W)$ does not contain any singleton fixed point supports, since $G$ does not contain any sinks.  Additionally, $\FP(W)$ cannot contain any fixed point supports of size $2$, since every size-$2$ subgraph is the directed edge, which is an invariant forbidden motif. By Lemma~\ref{lemma:parity} (parity), $|\FP(W)|$ is always odd, and thus $\FP(W)$ must have at least one fixed point. The only possibility is the full-support fixed point $123$. We thus conclude that the $3$-cycle is an invariant permitted motif with $\FP(W) = \{123\}$ for any $W$ compatible with $G$.

Finally, the characterization of flexible motifs in part 1 follows immediately from parts 2 and 3 since a motif is flexible if and only if it is neither invariant permitted nor invariant forbidden.  Moreover, Proposition~\ref{prop:fix-pt-construction} gives an explicit construction for a $W$ with $[n] \in \FP(W)$ for any graph that does not have both a source and a sink, while Proposition~\ref{prop:node-complement} gives a construction for a $W'$ with $[n] \not\in \FP(W')$ for any graph with $n \geq 3$ that is not the $3$-cycle.  Thus, for any graph that satisfies both these properties, we have explicit constructions demonstrating the flexibility of the motif.
\end{proof}

\section{Robust motifs and the proofs of Theorems~\ref{thm:robust-motifs} and~\ref{thm:collapse}}\label{sec:robust}

In this section, we first prove Theorem~\ref{thm:collapse}, which allows us to reduce $\FP(W)$ to that of a subnetwork $W_\tau$ whenever the corresponding graph $G_W$ contains both a source and a target. Recall from Theorem~\ref{thm:flexible-motifs} that these graphs are precisely the invariant forbidden motifs. We will then
use Theorem~\ref{thm:collapse}, together with results characterizing invariant permitted and forbidden motifs from the previous section, in order to prove Theorem~\ref{thm:robust-motifs}, which gives the full classification of robust motifs.

We begin with the following lemma, which is key to proving Theorem~\ref{thm:collapse}.  

\begin{lemma}\label{lemma:remove-sources}
 Let $G$ be a graph on $n$ nodes containing a source $s$ and a target $t$. Then
$$\FP(W) = \FP(W_{[n] \setminus s})$$
for every $W$ with graph $G_W=G$.
\end{lemma}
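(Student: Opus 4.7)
The plan is to prove the two inclusions of $\FP(W) = \FP(W_{[n]\setminus s})$ separately, with the essential leverage coming from the strong domination $t \gg_\sigma s$ that Lemma~\ref{lemma:source-target-domination} provides for every $\sigma$. As a preliminary step, I would observe that no fixed-point support of $W$ can contain $s$. Indeed, since $s$ is a source of $G$, it receives no edges from any subset $\sigma$, so $s$ is a source (internal if $s\in\sigma$, external otherwise) of $G|_\sigma$, and likewise $t$ is a target of $G|_\sigma$. Hence if $s \in \sigma$, then parts (i) or (ii) of Proposition~\ref{prop:target-source} (depending on whether $t \in \sigma$ or $t \notin \sigma$) immediately force $\sigma \notin \FP(W)$. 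This means every element of $\FP(W)$ is automatically a subset of $[n]\setminus s$, so it suffices to compare the two sides on such subsets.

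For $\sigma \subseteq [n]\setminus s$, the forward implication $\sigma \in \FP(W) \Rightarrow \sigma \in \FP(W_{[n]\setminus s})$ is immediate from Lemma~\ref{lemma:inheritance}. For the reverse implication, I would again use Lemma~\ref{lemma:inheritance} to reduce the claim to a single additional survival check: assuming $\sigma \in \FP(W_{[n]\setminus s})$, one already has $\sigma \in \FP(W_{\sigma})$ and $\sigma \in \FP(W_{\sigma \cup k})$ for every $k \in [n]\setminus s \setminus \sigma$, so it remains only to show $\sigma \in \FP(W_{\sigma \cup s})$. By Lemma~\ref{lemma:source-target-domination} we have $t \gg_\sigma s$, and I split into two cases. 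If $t \in \sigma$, part (iii) of Lemma~\ref{lemma:strong-domination} (applied with $k=t$, $j=s$) gives $\sigma \in \FP(W_{\sigma \cup s}) \iff \sigma \in \FP(W_\sigma)$, and the latter holds by hypothesis. If $t \notin \sigma$, part (iv) of the same lemma gives $\sigma \in \FP(W_{\sigma \cup t}) \Rightarrow \sigma \in \FP(W_{\sigma \cup s})$, and the premise is available through Lemma~\ref{lemma:inheritance} applied inside $W_{[n]\setminus s}$ (since $t \in [n]\setminus s$). Either way the survival condition holds, and Lemma~\ref{lemma:inheritance} then upgrades this to $\sigma \in \FP(W)$, closing the argument.

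The main obstacle is really just the survival of $\sigma$ against the source node $s$: the fact that $\sigma$ survives against every other external node does not, on its own, imply survival against $s$, and the only tool I see for bridging this gap is the strong domination of $s$ by the target $t$, which is available independent of $\sigma$. A subtle but easy point is that one needs $s \neq t$ so that $t$ is genuinely a witness distinct from $s$; this is automatic as soon as $n \geq 2$, since a target receives edges from every other node while a source receives none, and these conditions are incompatible for any two distinct nodes.
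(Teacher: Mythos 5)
Your proposal is correct and follows essentially the same route as the paper: rule out supports containing $s$ via the source--target domination $t \gg_\sigma s$, use Lemma~\ref{lemma:inheritance} for the forward inclusion, and for the reverse inclusion reduce to the single survival check $\sigma \in \FP(W_{\sigma \cup s})$, handled by the two cases $t \in \sigma$ and $t \notin \sigma$ exactly as in the paper. The only cosmetic difference is that you invoke Lemmas~\ref{lemma:source-target-domination} and~\ref{lemma:strong-domination} directly where the paper cites parts (iii) and (iv) of Proposition~\ref{prop:target-source}, which is the same machinery.
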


\begin{proof}
First we show that $\FP(W) \subseteq \FP(W_{[n] \setminus s})$.  Let $\sigma \subseteq [n]$ with $s \in \sigma$.  Since $t$ is a target of $G$, in particular it is a target of $G|_\sigma$, and so by Proposition~\ref{prop:target-source}, $\sigma \notin \FP(W)$ for any $W$ with graph $G_W=G$.  Thus, the only possible fixed point supports are $\sigma \subseteq [n]\setminus s$.  By part 2 of Lemma~\ref{lemma:inheritance}, a necessary condition for $\sigma \in \FP(W)$ is that it support a fixed point in every intermediate subnetwork containing it, and thus we must have $\sigma \in \FP(W_{[n] \setminus s})$.  Hence, $\FP(W) \subseteq \FP(W_{[n] \setminus s})$.

To show the reverse containment, we must show that every $\sigma \in \FP(W_{[n] \setminus s})$ survives the addition of the source $s$ to be a fixed point of the full network.  There are two cases: (1) $\sigma$ contains the target $t$, and (2) $\sigma$ does not contain $t$.  Observe that by Lemma~\ref{lemma:inheritance}, $\sigma \in \FP(W_{[n] \setminus s})$ is equivalent to $\sigma \in \FP(W_{\sigma \cup k})$ for all $k \in [n] \setminus s$; thus, to show that $\sigma \in \FP(W)$ all that remains is to show that we also have $\sigma \in \FP(W_{\sigma \cup s})$.  In case (1) where $t \in \sigma$ is an internal target, part (iii) of Proposition~\ref{prop:target-source} guarantees that $\sigma \in \FP(W_{\sigma \cup s})$ (since $G|_\sigma$ has not outgoing edges in $G|_{\sigma \cup s}$).  For case (2) where $t \notin \sigma$ is an external target, the fact that $\sigma \in \FP(W_{[n] \setminus s})$ guarantees in particular that $\sigma$ survived the addition of the target $t$, so $\sigma \in \FP(W_{\sigma \cup t})$, and thus by part (iv) of Proposition~\ref{prop:target-source}, $\sigma \in \FP(W_{\sigma \cup s})$ as well.  Thus we have $\FP(W) = \FP(W_{[n] \setminus s})$
for every $W$ with graph $G_W = G$.
\end{proof}

We can now prove Theorem~\ref{thm:collapse} by iteratively applying Lemma~\ref{lemma:remove-sources} as we ``strip off" sources in $G$.  

\begin{proof}[{\bf Proof of Theorem~\ref{thm:collapse}}]
Let $G$ be a graph that contains both a source $s$ and a target $t$. Let $\omega \:\dot\cup\: \tau = [n]$ be a source-target decomposition of $G$, i.e.\ $s \in \omega$, $t \in \tau$, $G|_\omega$ is a DAG, and there are no edges from $\tau$ to $\omega$. By a well-known property of DAGs, there exists a {\it topological ordering} of the nodes $1, \ldots, |\omega|$ such that if $j >i$, then $j \not\to i$ in $G|_\omega$ (so the edges in $G|_\omega$ only go from nodes that are lower in the ordering to nodes that are higher).  Additionally, since there are no edges from $\tau$ to $\omega$, node $1$ is not only a source in $G|_\omega$ but also a source in $G$.  Applying Lemma~\ref{lemma:remove-sources}, we see that
 $$\FP(W) = \FP(W_{(\omega \setminus \{1\}) \cup \tau})$$
 for any $W$ with graph $G_W = G$.  Now consider the graph $G|_{(\omega \setminus \{1\}) \cup \tau}$, where we have removed the source node $1$.  In this graph, node $2$ is a source, and so we can again apply Lemma~\ref{lemma:remove-sources} to obtain 
 $$\FP(W) = \FP(W_{(\omega \setminus \{1\}) \cup \tau}) = \FP(W_{(\omega \setminus \{1, 2\}) \cup \tau}). $$ 
Continuing in this fashion, we can remove all nodes from $\omega$, since they each become sources after all lower-numbered nodes have been removed. We thus obtain $\FP(W) = \FP(W_\tau)$.
\end{proof}

We can now prove Theorem~\ref{thm:robust-motifs}, which says that the invariant permitted motifs are robust, and every other robust motif belongs to one of the infinite families DAG1 or DAG2, depicted in Figure~\ref{fig:robust-motifs}.

\begin{proof}[{\bf Proof of Theorem~\ref{thm:robust-motifs}}]
Recall that every robust motif must be either invariant permitted or invariant forbidden. By part 3 of Theorem~\ref{thm:flexible-motifs}, all the invariant permitted motifs are robust and they are precisely the singleton, the $2$-clique, the independent set of size $2$, and the $3$-cycle. The proof of Theorem~\ref{thm:flexible-motifs} also established the $\FP(W)$ for each of these graphs.

It remains to show that all other robust motifs belong to either DAG1 or DAG2, and that all graphs in these infinite families are robust. Since all robust motifs that are not invariant permitted must be invariant forbidden, the remaining robust motifs must all contain both a source and target, by Theorem~\ref{thm:flexible-motifs}. Any such graph $G$ has a source-target decomposition $\omega \:\dot\cup\: \tau = [n]$, and thus by Theorem~\ref{thm:collapse}, $\FP(W) = \FP(W_\tau)$ for any $W$ with graph $G$. It follows from this partitioning that if $G$ is robust, then $G|_\tau$ must also be robust. Note that if $G|_\tau$ contains a source, then we can move it to $\omega$ and create a new source-target decomposition. Thus, by iteratively moving sources of $G|_\tau$ into $\omega$, we can obtain a partition where $G|_\tau$ has no sources (see Figure~\ref{fig:source-target-decomposition-example}). $G|_\tau$ must then be a robust motif that contains a target but has no sources. It follows that $G|_\tau$ must be an invariant permitted motif that contains a target $t$. There are only two such graphs: the singleton and the $2$-clique. If $G|_\tau$ is the singleton, so that $\tau = \{t\}$, then $G$ is in DAG1 and $\FP(W) = \FP(W_\tau) = \{t\}$. Otherwise, if $G|_\tau$ is the $2$-clique, so that $\tau = \{t,u\}$, then $\FP(W) = \{tu\}$. 
Thus, every robust motif that is invariant forbidden lives in DAG1 or DAG2, and this same reasoning shows that every graph in DAG1 and DAG2 is robust.  
\end{proof}

%%%%%%%%%%%%%%%%%%%
\bigskip

\noindent{\bf Acknowledgments.} This work was supported by NIH R01 EB022862 and NSF DMS-1516881. 

%%%%%%%%%%%%%%%%%%%
\bibliographystyle{unsrt}
\bibliography{CTLN-refs}

%%%%%%%%%%%%%%%%%%%
%\pagebreak 

\section{Appendix}

\subsection*{A.1 Strong domination implies general domination}
In this appendix, we briefly review the notion of \emph{general domination}, which was introduced in \cite{fp-paper} and shown to fully characterize the collection of fixed points of a TLN.  We then show that strong domination is a special case of this, and use this relationship to prove Lemma~\ref{lemma:strong-domination} from Section~\ref{sec:strong-domination}.

Motivated by Cramer's rule, we define
\begin{equation}\label{eq:s_i}
s_i^\sigma \od \det(I-W_{\sigma \cup i}; \theta) \quad \quad \text{for each } \sigma \subseteq [n] \text{ and } i \in [n].
\end{equation}
Note that Cramer's rule guarantees that when $\sigma \in \FP(W)$, the corresponding fixed point $x^*$ has $x_i^* = s_i^\sigma/ \det(I-W_\sigma)$ for all $i \in \sigma$.  

General domination captures the relationship between weighted sums of the magnitudes of the $s_i^\sigma$:

\begin{definition}[general domination]\label{def:general-domination}
Consider a TLN on $n$ nodes with matrix $W$, let $\sigma \subseteq [n]$ be nonempty, and let $\Wtil \od -I + W$.  For any $j,k \in [n]$, we say that $k$ {\em dominates} $j$ with respect to $\sigma$, and write $k >_{\sigma} j$, if
$$\sum_{i \in \sigma} \Wtil_{ki} |s_i^\sigma| > \sum_{i \in \sigma} \Wtil_{ji} |s_i^\sigma|.$$  
\end{definition}

Recall that strong domination has precisely the same set-up as general domination, but requires $\Wtil_{ki} \geq \Wtil_{ji}$ for all $i \in \sigma$ with at least one inequality being strict.  Thus, irrespective of the values of the $|s_i^\sigma|$, we see that whenever $k$ strongly dominates $j$ with respect to $\sigma$, we have general domination as well (although not vice versa).

The following theorem from \cite{fp-paper} shows that general domination precisely characterizes all fixed point supports of a TLN.  

\begin{theorem}[general domination --  Theorem 15 in \cite{fp-paper}] \label{thm:domination}
Consider a TLN on $n$ nodes with matrix $W$, and let $\sigma \subseteq [n]$. Then
$$\sigma \text{ is a permitted motif of $W$} \ \ \Leftrightarrow \ \ \sum_{\ell \in \sigma} \Wtil_{i\ell} |s_\ell^\sigma| = \sum_{\ell \in \sigma} \Wtil_{j\ell} |s_\ell^\sigma| \text{ for all } i, j \in \sigma.$$ 
If $\sigma$ is a permitted motif of $W$, then $\sigma \in \FP(W)$ if and only if
for each $k \not\in \sigma$ there exists $j \in \sigma$ such that $j >_\sigma k$.
\end{theorem}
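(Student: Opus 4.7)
The plan is to exploit the Cramer's rule identity $s_\ell^\sigma = x_\ell^\sigma \det(I-W_\sigma)$ (valid for $\ell \in \sigma$) in order to recast both parts of the theorem as linear-algebra statements about the vector with entries $|s_\ell^\sigma|$, $\ell \in \sigma$. Starting from $(I-W_\sigma)\,x^\sigma = \theta\,1_\sigma$ and multiplying through by $\det(I-W_\sigma)$, I get $\sum_{\ell\in\sigma}(I-W_\sigma)_{i\ell}\,s_\ell^\sigma = \theta\det(I-W_\sigma)$ for every $i\in\sigma$. Since $(I-W_\sigma)_{i\ell} = -\Wtil_{i\ell}$ for $i,\ell\in\sigma$, this rewrites as
\[ \sum_{\ell\in\sigma} \Wtil_{i\ell}\, s_\ell^\sigma \;=\; -\theta\det(I-W_\sigma) \qquad \text{for every } i\in\sigma, \]
a constant-in-$i$ identity that will be the anchor for everything that follows.

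For the forward direction of part (a), permittedness means $x_\ell^\sigma>0$ for all $\ell\in\sigma$, so every $s_\ell^\sigma$ shares the sign $\varepsilon \od \sgn(\det(I-W_\sigma))$, and $|s_\ell^\sigma|=\varepsilon\, s_\ell^\sigma$. Multiplying the anchor identity by $\varepsilon$ then yields $\sum_\ell \Wtil_{i\ell}|s_\ell^\sigma| = -\theta|\det(I-W_\sigma)|$ for all $i\in\sigma$, the desired equality of row sums. For the converse, I will assume that these sums are a common constant $-c$ across $i\in\sigma$, which is equivalent to $(I-W_\sigma)\,u = c\,1_\sigma$ for the vector $u$ with $u_\ell = |s_\ell^\sigma|$; inverting gives $u = (c/\theta)\,x^\sigma$, so $|s_\ell^\sigma| = (c/\theta)\,x_\ell^\sigma$ for every $\ell$. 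By nondegeneracy the $|s_\ell^\sigma|$ are nonzero, hence every $x_\ell^\sigma$ must share the single nonzero sign of $c/\theta$. To conclude positivity I will invoke that $I-W_\sigma$ has strictly positive entries in the competitive setting (diagonal $1$, off-diagonals $-W_{ij}>0$), so an all-negative $x^\sigma$ is incompatible with $(I-W_\sigma)\,x^\sigma = \theta\,1_\sigma$ for $\theta>0$; therefore $x_\ell^\sigma>0$ for all $\ell$ and $\sigma$ is permitted.

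For part (b), assume $\sigma$ is permitted with common sign $\varepsilon$. The survival inequality $\sum_{\ell\in\sigma}W_{k\ell}x_\ell^\sigma + \theta \leq 0$ at $k\notin\sigma$ I will attack by substituting $x_\ell^\sigma = s_\ell^\sigma/\det(I-W_\sigma)$, multiplying through by $\det(I-W_\sigma)$ (flipping the inequality exactly when $\varepsilon=-1$), and using $\Wtil_{k\ell}=W_{k\ell}$ for $k\notin\sigma$, $\ell\in\sigma$, together with $|s_\ell^\sigma|=\varepsilon\, s_\ell^\sigma$. Both sign cases collapse to the single inequality
\[ \sum_{\ell\in\sigma} \Wtil_{k\ell}\,|s_\ell^\sigma| \;\leq\; -\theta|\det(I-W_\sigma)|, \]
whose right-hand side, by part (a), equals $\sum_{\ell\in\sigma}\Wtil_{j\ell}|s_\ell^\sigma|$ for every $j\in\sigma$. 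Nondegeneracy precludes equality, so survival at $k$ is equivalent to the existence of some $j\in\sigma$ with $\sum_\ell \Wtil_{j\ell}|s_\ell^\sigma| > \sum_\ell \Wtil_{k\ell}|s_\ell^\sigma|$, i.e.\ $j >_\sigma k$. Intersecting over all $k\notin\sigma$ yields the stated characterization of $\sigma\in\FP(W)$.

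The main obstacle I anticipate is the converse of part (a): the hypothesis is only a system of linear equalities, and translating it back to positivity of $x^\sigma$ requires both reconciling the unsigned $|s_\ell^\sigma|$ with the signed $s_\ell^\sigma$ (using nondegeneracy to rule out vanishing) and excluding the all-negative alternative via the positive sign pattern of $I-W_\sigma$. Once that step is in place, the signed/unsigned bookkeeping in part (b) becomes mechanical, because the one identity $\sum_\ell \Wtil_{i\ell}s_\ell^\sigma = -\theta\det(I-W_\sigma)$ already determines the right-hand side of the survival inequality.
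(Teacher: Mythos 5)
This theorem is not actually proved in the paper: it is quoted as Theorem 15 of \cite{fp-paper}, so there is no internal argument to compare you against, and your reconstruction has to stand on its own. Its main line is sound in the competitive, nondegenerate setting assumed throughout: the anchor identity $\sum_{\ell\in\sigma}\Wtil_{i\ell}\,s_\ell^\sigma=-\theta\det(I-W_\sigma)$ for $i\in\sigma$ does follow from $(I-W_\sigma)x^\sigma=\theta\mathbf{1}_\sigma$ and Cramer's rule; permittedness makes all $s_\ell^\sigma$ share the sign of $\det(I-W_\sigma)$, giving the common value $-\theta|\det(I-W_\sigma)|$; and in the converse, equal row sums give $|s_\ell^\sigma|=(c/\theta)x_\ell^\sigma$ with $|s_\ell^\sigma|>0$ by nondegeneracy, while the all-negative alternative is correctly excluded because $I-W_\sigma$ is entrywise positive (here competitiveness is genuinely used). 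The sign bookkeeping that reduces the survival condition at $k\notin\sigma$ to $\sum_{\ell\in\sigma}\Wtil_{k\ell}|s_\ell^\sigma|\le-\theta|\det(I-W_\sigma)|$ also checks out in both cases $\det(I-W_\sigma)\gtrless 0$.

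The one step you assert without justification is that ``nondegeneracy precludes equality,'' and this is precisely where the non-strict survival inequality gets upgraded to the strict domination $j>_\sigma k$; without it, the forward implication of the second statement fails, since equality $\theta+\sum_{\ell\in\sigma}W_{k\ell}x^\sigma_\ell=0$ would let $\sigma$ survive the addition of $k$ with no $j\in\sigma$ dominating $k$. Nondegeneracy (Definition~\ref{def:nondegenerate}) only says that certain determinants and Cramer determinants are nonzero, so you must connect the survival threshold to one of them. The needed identity, for $k\notin\sigma$, is
$$s_k^\sigma=\det\bigl((I-W_{\sigma\cup k})_k;\theta\bigr)=\det(I-W_\sigma)\Bigl(\theta+\sum_{\ell\in\sigma}W_{k\ell}\,x^\sigma_\ell\Bigr),$$
obtained from the block-determinant (Schur complement) formula applied to $I-W_{\sigma\cup k}$ with the $k$th column replaced by $\theta\mathbf{1}$, using $x^\sigma=\theta(I-W_\sigma)^{-1}\mathbf{1}_\sigma$. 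Nondegeneracy applied to the subset $\sigma\cup k$ gives $s_k^\sigma\neq 0$, hence $\theta+\sum_{\ell\in\sigma}W_{k\ell}x^\sigma_\ell\neq 0$, and your equivalence with the existence of some $j\in\sigma$ satisfying $j>_\sigma k$ closes. With that short computation added, the proof is complete.
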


The following lemma unpacks the consequences of Theorem~\ref{thm:domination} in the presence of different general domination relationships.  Note that parts (i)-(iii) of Lemma~\ref{lemma:domination} previously appeared as Lemma 16 in \cite{fp-paper}, but we provide the proofs again here, together with a proof the the new item (iv), for completeness.  

\begin{lemma}\label{lemma:domination}
Suppose $k >_\sigma j$. Then the following statements all hold:
\begin{itemize}
\item[(i)] If $j,k \in \sigma$, then $\sigma \notin \FP(W_\sigma)$, and thus $\sigma \notin \FP(W)$.
\item[(ii)] If $j \in \sigma$ and $k \notin \sigma$, then $\sigma \notin \FP(W_{\sigma \cup k})$, and thus $\sigma \notin \FP(W)$.
\item[(iii)] If $k \in \sigma$ and $j \notin \sigma$, then $\sigma \in \FP(W_{\sigma \cup j}) \; \Leftrightarrow \; \sigma \in \FP(W_\sigma)$.
\item[(iv)] If $j,k \notin \sigma$, then $\sigma \in \FP(W_{\sigma \cup k}) \Rightarrow  \FP(W_{\sigma \cup j}).$
\end{itemize}
\end{lemma}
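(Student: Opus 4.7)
The plan is to deploy Theorem~\ref{thm:domination} as the single workhorse and let each part fall out by comparing the weighted sums
\[
f(i) \;\od\; \sum_{\ell \in \sigma} \Wtil_{i\ell} \, |s_\ell^\sigma|, \qquad i \in [n].
\]
In this language, the hypothesis $k >_\sigma j$ says exactly $f(k) > f(j)$, and Theorem~\ref{thm:domination} says that $\sigma$ is a permitted motif of a TLN with matrix $M$ precisely when $f$ (computed from $M$) is constant on $\sigma$; provided that holds, $\sigma \in \FP(M)$ iff every external node $\ell \notin \sigma$ is strictly dominated by some internal node. A small but essential bookkeeping point to flag at the outset: the quantities $|s_\ell^\sigma|$ depend only on $W_\sigma$, and the sum defining $f(i)$ only involves the row entries $\Wtil_{i\ell}$ for $\ell \in \sigma$, so $f(i)$ is the same in $W$, $W_{\sigma\cup j}$, $W_{\sigma \cup k}$, and $W_\sigma$ whenever $i$ is an index of the matrix in question.

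For part (i), with $j,k \in \sigma$, the strict inequality $f(k) > f(j)$ immediately violates the constancy characterization of permittedness, so $\sigma \notin \FP(W_\sigma)$, and then Lemma~\ref{lemma:inheritance} gives $\sigma \notin \FP(W)$. For part (ii), I would argue by contradiction: if $\sigma \in \FP(W_{\sigma \cup k})$, then $\sigma$ is permitted, so $f$ is constant on $\sigma$; in particular $f(i) = f(j)$ for every $i \in \sigma$. But then no $i \in \sigma$ can satisfy $i >_\sigma k$, since $f(i) = f(j) < f(k)$, contradicting the external-domination clause of Theorem~\ref{thm:domination} applied to the external node $k$. For part (iii), the implication $\Leftarrow$ follows because permittedness of $\sigma$ forces $f(k) = f(i)$ for every $i \in \sigma$, and since $k \in \sigma$ we get $f(k) > f(j)$ from the hypothesis, which is exactly the survival condition for the new external node $j$; the reverse implication $\Rightarrow$ is immediate from Lemma~\ref{lemma:inheritance}.

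The only genuinely new part is (iv), with $j, k \notin \sigma$. Assuming $\sigma \in \FP(W_{\sigma \cup k})$, Theorem~\ref{thm:domination} gives that $\sigma$ is permitted (so $f$ is constant on $\sigma$, independent of whether we work over $W_{\sigma \cup k}$ or $W_{\sigma \cup j}$) and that there exists $i \in \sigma$ with $i >_\sigma k$, i.e.\ $f(i) > f(k)$. Chaining with the hypothesis $f(k) > f(j)$ gives $f(i) > f(j)$, so $i >_\sigma j$, which is precisely the survival condition needed for the external node $j$. Combined with permittedness of $\sigma$, Theorem~\ref{thm:domination} then yields $\sigma \in \FP(W_{\sigma \cup j})$.

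I do not expect any serious obstacle here, since the four parts are essentially parallel applications of Theorem~\ref{thm:domination}; the mild hazard is the notational one flagged above, namely making sure that the ``same'' $f(i)$ and $|s_\ell^\sigma|$ are being used across the three different ambient matrices $W_\sigma$, $W_{\sigma \cup j}$, and $W_{\sigma\cup k}$. Once that invariance is noted explicitly, parts (i)--(iii) reduce to one-line arguments and part (iv) is just the transitivity of $>$ applied to $f(i) > f(k) > f(j)$.
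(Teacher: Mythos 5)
Your proposal is correct and follows essentially the same route as the paper's own proof: your $f(i)$ is exactly the paper's weighted sum $w_i^\sigma$, and each of parts (i)--(iv) is handled by the same appeal to the constancy (permittedness) and external-domination (survival) clauses of Theorem~\ref{thm:domination}, with (iv) reduced to the same transitivity chain $w_\ell^\sigma > w_k^\sigma > w_j^\sigma$. The only difference is that you make explicit the (correct) bookkeeping point that $f$ and the $|s_\ell^\sigma|$ are unchanged across the ambient matrices $W_\sigma$, $W_{\sigma\cup j}$, $W_{\sigma\cup k}$, which the paper leaves implicit.
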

\begin{proof}
For ease of notation, define $w_j^\sigma$ as the relevant weighted sum for domination:
$$w_j^\sigma \od \sum_{i \in \sigma} \Wtil_{ji} |s_i^\sigma|.$$
With this notation, Theorem~\ref{thm:domination} says that $\sigma$ is a permitted motif if and only if $w_i^\sigma = w_j^\sigma$ for all $i,j \in \sigma$. If $\sigma$ is not a permitted motif, then $\sigma \notin \FP(W_\sigma)$ and thus $\sigma \notin \FP(W)$.

The hypothesis $k>_\sigma j$ implies  $w_k^\sigma > w_j^\sigma$. Part (i) now follows immediately from Theorem~\ref{thm:domination}, which tells us that $\sigma$ is not a permitted motif if $j, k \in \sigma$. For part (ii), observe that if $\sigma$ is a permitted motif, then $w_i^\sigma = w_j^\sigma < w_k^\sigma$ for all $i \in \sigma$.  Thus there is no $i \in \sigma$ such that $i >_\sigma k$, and so $\sigma$ does not survive the addition of node $k$ by Theorem~\ref{thm:domination}. Hence, $\sigma \notin \FP(W_{\sigma \cup k})$.

Part (iii) is essentially the final statement of Theorem~\ref{thm:domination} (with the roles of $j,k$ reversed), and so $\sigma$ survives the addition of node $j$ precisely when $\sigma$ is a permitted motif.  
Finally, for part (iv), suppose $\sigma \in \FP(W_{\sigma \cup k})$.  Then by Theorem~\ref{thm:domination}, $\sigma$ is a permitted motif and there exists an $\ell \in \sigma$ such that $\ell >_\sigma k$, i.e.\ $w_\ell^\sigma > w_k^\sigma$.  Then we also have $w_\ell^\sigma > w_j^\sigma$, and so $\ell >_\sigma j$ and $\sigma \in \FP(W_{\sigma \cup j}).$
\end{proof}

Observe that Lemma~\ref{lemma:strong-domination} from Section~\ref{sec:strong-domination} is identical to the previous lemma except that its hypothesis requires that $k$ \emph{strongly} dominates $j$ with respect to $\sigma$.  Since strong domination implies general domination, we obtain Lemma~\ref{lemma:strong-domination} as an immediate corollary.

%%%%%%%%%%%%%%%%%%%

\subsection*{A.2 Survival of permitted motifs}\label{sec:survival}
The body of this paper has focused on characterizing graphs that are robust, and thus have $\FP(W)$ constant across all $W$ compatible with the graph.  A necessary condition for robustness is that for every subset $\sigma \subseteq [n]$, we have that $\sigma$ is either never in $\FP(W)$ or always in $\FP(W)$.  The first two conditions on $G|_\sigma$ given below characterize how $\sigma$ could be guaranteed to never be in $\FP(W)$, while the third condition characterizes precisely when $\sigma$ is guaranteed to always be in $\FP(W)$; thus, every subgraph $G|_\sigma$ of a robust motif must satisfy one of the following conditions: 
\begin{enumerate}
\item $G|_\sigma$ is an invariant forbidden motif,
\item $G|_\sigma$ is an arbitrary motif, whose embedding guarantees that whenever it is permitted, the fixed point always dies in $G$, or
\item $G|_\sigma$ is an invariant permitted motif, whose fixed point always survives in all of $G$.
\end{enumerate}
We have seen that the subgraphs satisfying condition 1 are precisely those that contain both a source and an internal target.  In Section~\ref{sec:strong-domination}, it was shown that if $G|_\sigma$ contains a source and is embedded in such a way that $G$ contains an external target of $G|_\sigma$, then any fixed point supported on $\sigma$ is guaranteed to never survive in all of $G$ (see Lemma~\ref{lemma:strong-domination}(ii) proven by strong domination).  Thus, any subgraph containing a source and embedded in $G$ with a target will satisfy condition 2, and thus be an allowable subgraph of a robust motif.  Corollaries~\ref{cor:cliques} and~\ref{cor:singleton} showed, again by strong domination, that the singleton and the $2$-clique embedded with no outgoing edges in $G$ are invariant permitted motifs that always survive to yield fixed points of $G$, thus satisfying condition 3.  A careful examination of the graphs in DAG1 and DAG2, together with the size-$2$ independent set and the $3$-cycle, shows that these are the only subgraphs that ever appear in robust motifs.  

In this appendix, we explicitly show that every other subgraph/possible embedding has \emph{parameter-dependent survival} and thus cannot be guaranteed to either always survive or never survive to support a fixed point in the full network. 

\begin{definition}
Let $G$ be a graph on $n$ nodes and $\sigma \subsetneq [n]$.  We say that $\sigma$ has \emph{parameter-dependent survival in $G$} if there exists a $W$ with graph $G_W=G$ such that $\sigma \in \FP(W)$ and another $W'$ with graph $G_{W'}=G$ such that $\sigma \in \FP(W'|_\sigma)$ but $\sigma \notin \FP(W')$.  In other words, $\sigma$ has parameter-dependent survival if there exists a choice of $W$ where $\sigma$ survives to support a fixed point of the full network and another choice where it is permitted, but it does not survive.
\end{definition}

Since the notion of a motif surviving in a network requires that it be permitted in its own restricted subnetwork, we do not consider invariant forbidden motifs since these are never permitted for any choice of $W$.  Thus, we restrict focus to motifs that do not contain both a source and a target; recall from Proposition~\ref{prop:fix-pt-construction} that every such motif was shown to have some parameter regime in which it is permitted.  The following proposition further shows that these motifs all have parameter-dependent survival irrespective of their embeddings in $G$ except for two special cases where strong domination applies (described above), forcing guaranteed survival or death.

\begin{proposition} \label{prop:param-dep-survival}
Let $G$ be a graph on $n$ nodes and $\sigma \subsetneq [n]$ such that $G|_\sigma$ does not contain both a source and an internal target.  Then
\begin{itemize}
\item[1.] If $G|_\sigma$ has no sources and no targets, then $\sigma$ has parameter-dependent survival in $G$.
\item[2.] \begin{itemize}
\item[(a)] If $G|_\sigma$ has a source, but no external target in $G$, then $\sigma$ has parameter-dependent survival in $G$.
\item[(b)] If $G|_\sigma$ has a source and an external target in $G$, then $\sigma \notin \FP(W)$ for any $W$ with graph $G_W=G$.  
\end{itemize}
\item[3.] \begin{itemize}
\item[(a)] If $G|_\sigma$ has an internal target and there is at least one outgoing edge from $\sigma$ in $G$, then $\sigma$ has parameter-dependent survival in $G$.
\item[(b)] If $G|_\sigma$ has an internal target and there are no outgoing edges from $\sigma$ in $G$, then $\sigma \in \FP(W) \ \Leftrightarrow \ \sigma \in \FP(W_\sigma)$ for every $W$ with graph $G_W = G$.
\end{itemize}
\end{itemize}
\end{proposition}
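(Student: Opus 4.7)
My plan is to dispose of parts 2(b) and 3(b) as immediate consequences of results proved earlier in the paper, and then handle the three parameter-dependent survival claims (parts 1, 2(a), and 3(a)) by a single unified construction based on the uniform fixed point from the proof of Proposition~\ref{prop:fix-pt-construction}. Part 2(b) follows at once from Lemma~\ref{lemma:source-target-domination} together with Lemma~\ref{lemma:strong-domination}(ii): the internal source $j \in \sigma$ and external target $k \notin \sigma$ give $k \gg_\sigma j$, forcing $\sigma \notin \FP(W_{\sigma \cup k})$ and hence $\sigma \notin \FP(W)$ for every compatible $W$. Part 3(b) is precisely part (iii) of Proposition~\ref{prop:target-source}.

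For the parameter-dependent survival cases, I will reuse the uniform fixed point construction from the proof of Proposition~\ref{prop:fix-pt-construction}. In each of Cases 1, 2(a), and 3(a), the restricted graph $G|_\sigma$ satisfies the hypothesis of that proposition (it does not contain both a source and a target, since in each case one of these is ruled out), so the construction furnishes a $W_\sigma$ compatible with $G|_\sigma$ admitting the explicit permitted fixed point $x^\sigma = \frac{\theta}{|\sigma|+r}\mathbf{1}_\sigma$. I take the $r>0$ branch (which requires no targets in $G|_\sigma$) for Cases 1 and 2(a), and the $r<0$ branch (which requires no source in $G|_\sigma$) for Case 3(a). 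Setting $y_i := x^\sigma_i/\theta = 1/(|\sigma|+r)$, the survival condition at an external node $k$ becomes $\sum_{i \in \sigma} W_{ki}\, y_i \leq -1$, and Lemma~\ref{lemma:inheritance} allows these conditions to be imposed or violated one external node at a time.

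To build a surviving extension, for each $k \notin \sigma$ that is \emph{not} an external target of $G|_\sigma$ there is some $i \in \sigma$ with $i \not\to k$, and pushing that $W_{ki} < -1$ sufficiently negative forces the survival inequality at $k$. External targets of $G|_\sigma$ can arise only in Case 3(a); for such $k$, the choice $r \in (-1,0)$ gives $\sum_{i \in \sigma} y_i = |\sigma|/(|\sigma|+r) > 1$, so every $W_{ki} \in (-1,0)$ selected close enough to $-1$ likewise satisfies survival. To build a non-surviving extension that still keeps $\sigma$ permitted, I single out one witness $k_0 \notin \sigma$: any $k_0$ will do in Cases 1 and 2(a), and in Case 3(a) I take $k_0$ to be the target of one of the outgoing edges from $\sigma$, guaranteed to exist by hypothesis. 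Writing $N_{k_0} := \{i \in \sigma : i \not\to k_0\}$, the supremum of $\sum_{i \in \sigma} W_{k_0 i}\, y_i$ over graph-compatible choices equals $-|N_{k_0}|/(|\sigma|+r)$, so failure of the survival condition at $k_0$ is achievable precisely when $|N_{k_0}| < |\sigma|+r$. In Cases 1 and 2(a) this is automatic since $r>0$ and $|N_{k_0}| \leq |\sigma|$; in Case 3(a) the outgoing edge from $\sigma$ to $k_0$ guarantees $|N_{k_0}| \leq |\sigma|-1$, so any $r \in (-1,0)$ suffices.

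The main delicacy sits in Case 3(a), where $G|_\sigma$ may simultaneously carry an external target (pushing $r$ to be distinctly below $0$ so that survival at the external target can be forced) and an external node $k_0$ receiving only one incoming edge from $\sigma$ (pushing $r$ close to $0$ so that non-survival at $k_0$ is strictly achievable). Restricting $r$ to the open interval $(-1,0)$ accommodates both pressures simultaneously. The main bookkeeping burden is then to verify that each constructed $W_{ki}$ sits strictly on the correct side of $-1$ so that the resulting $W$ really satisfies $G_W = G$; once $r$ is fixed in that interval, all inequalities appearing above are strict, so appropriate choices can be made.
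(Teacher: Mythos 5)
Your overall route is the same as the paper's: parts 2(b) and 3(b) are dispatched by Proposition~\ref{prop:target-source} (equivalently, strong domination via Lemmas~\ref{lemma:source-target-domination} and~\ref{lemma:strong-domination}), and parts 1, 2(a), 3(a) are handled with the uniform fixed point $x^\sigma=\frac{\theta}{|\sigma|+r}\mathbf{1}_\sigma$ from Proposition~\ref{prop:fix-pt-construction}, checking survival one external node at a time via Lemma~\ref{lemma:inheritance} and the row-sum criterion $\sum_{i\in\sigma}W_{ki}\le -(|\sigma|+r)$. Your non-surviving constructions (the $|N_{k_0}|<|\sigma|+r$ criterion) and your Case 3(a) analysis, including the observation that $\sum_i y_i=|\sigma|/(|\sigma|+r)>1$ handles external targets and that $r\in(-1,0)$ accommodates both pressures, are correct.

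The one step to push back on is the surviving construction in Case 1, where you take $r>0$ and justify ignoring external targets by asserting that ``external targets of $G|_\sigma$ can arise only in Case 3(a).'' That assertion holds only if ``no targets'' in part 1 is read as excluding \emph{external} targets; the intended scope (see the exhaustiveness claim opening Appendix A.2, and the fact that parts 2 and 3 are organized by internal sources/targets) is that part 1 covers an arbitrary embedding of a subgraph with no internal source and no internal target --- for instance a $3$-cycle all of whose nodes send edges to some $k\notin\sigma$. For such a $k$ your $r>0$ construction cannot survive: every $W_{ki}\in(-1,0)$, so $\sum_{i\in\sigma}W_{ki}>-|\sigma|>-(|\sigma|+r)$ and the survival inequality is unattainable for any choice of entries. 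The paper sidesteps this by using $r<0$ for the surviving $W$ in part 1 (allowed because there is no internal source), so the threshold $-(|\sigma|+r)>-|\sigma|$ can be met even at an external target by choosing $W_{ki}\in(-1,\,-1-r/|\sigma|]$, while at non-targets one entry can still be pushed below $-1$. Switching your Case 1 surviving construction to the $r<0$ branch, exactly as you already do in Case 3(a), closes this and makes your argument match the paper's in full generality.
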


\begin{proof}
Part 2(b) follows directly from part (ii) of Proposition~\ref{prop:target-source}, while part 3(b) was covered by part (iii) of that proposition.  For all other parts of the statement, we will show parameter-dependent survival by exhibiting a choice of $W$ where $\sigma$ survives and a choice of $W'$ where $\sigma$ is permitted, but dies.  Recall from Section~\ref{sec:background} that when $\sigma$ is a permitted motif of $W$, it survives to yield a fixed point of the full network precisely when fixed point condition (2) is satisfied:
\begin{equation} \label{eq-off-condition}
\sum_{i\in\sigma} W_{ki}x_i^\sigma+\theta \leq 0 \text{ for all } k \notin \sigma.
\end{equation}

In the proof of Proposition~\ref{prop:fix-pt-construction}, it was shown that for any graph $G|_\sigma$ that does not have both a source and a target, it is possible to construct a $W_\sigma$ compatible with $G|_\sigma$ such that 
$$x_\sigma = \dfrac{\theta}{|\sigma|+r}{\bf 1_\sigma}$$
is a fixed point of $W_\sigma$, where $0 < |r|<|\sigma|$, and $r$ can be chosen so that $r >0$ if $G|_\sigma$ has no targets, or $r <0$ if $G|_\sigma$ has no sources.  For such a choice of $W_\sigma$ and the resultant $x_\sigma$, condition~\eqref{eq-off-condition} becomes
\begin{equation} \label{eq-off-condition2}
\sigma \in \FP(W) \ \ \Leftrightarrow \ \  \sum_{i\in\sigma} W_{ki} \leq - |\sigma| - r \ \text{ for all } k \notin \sigma. 
\end{equation}

For part 1, suppose $G|_\sigma$ has no sources and no targets.  First we will construct $W$ such that $\sigma \in \FP(W)$.  Choose $r<0$ and construct $W_\sigma$ so that $x_\sigma = \frac{\theta}{|\sigma|+r} {\bf 1}_\sigma$ is a fixed point of $W_\sigma$.  For each $k \notin \sigma$, choose 
$$W_{ki} \leq -1 - \frac{r}{|\sigma|}$$ 
in a way that is compatible with the edges from $\sigma$ to $k$ in $G$ (note compatibility with edges is always possible since $-1 - \frac{r}{|\sigma|} >-1$ for $r<0$).  Then $\sum_{i\in\sigma} W_{ki} \leq - |\sigma| - r$ for all $k \notin \sigma$, and so $\sigma \in \FP(W)$.  

Next to construct $W'$, choose $r>0$ and $W'_\sigma$, so that $x_\sigma = \frac{\theta}{|\sigma|+r} {\bf 1}_\sigma$ is a fixed point of $W'_\sigma$. For some $k \notin \sigma$, set 
$$W'_{ki} > -1 - \frac{r}{|\sigma|}$$ 
in a way that is compatible with the edges from $\sigma$ to $k$ in $G$ (here this is possible because $-1 - \frac{r}{|\sigma|} <-1$ for $r>0$).  Fill in the remaining entries of $W'$ arbitrarily in a way consistent with the graph $G$.  Since $\sum_{i\in\sigma} W'_{ki} > - |\sigma| - r$ for at least one $k$, we have $\sigma \notin \FP(W')$.\\

For part 2a, suppose $G|_\sigma$ has a source, but no external targets in $G$.  Because of the source, we must choose $r>0$ in the construction of $W_\sigma$. For each $k \notin \sigma$, there exists some $j \in \sigma$ such that $j \not\to k$, since $k$ is not an external target.    Set $W_{kj} \leq -1 - 2r$ while for all other $i \in \sigma\setminus \{j\}$, set $W_{ki} \leq-1 +\frac{r}{|\sigma| -1}$ in a way that is compatible with the graph $G$.  Then $\sum_{i\in\sigma} W_{ki} \leq - |\sigma| - r$ for all $k \notin \sigma$, and so $\sigma \in \FP(W)$.  For $W'$, again choose $r>0$ and follow the same method used to construct $W'$ from part 1.  Then we have $\sigma \notin \FP(W')$.\\

Finally, for part 3a, suppose $G|_\sigma$ has an internal target and at least one outgoing edge in $G$.  To construct $W$, follow the same method used to construct $W$ in part 1 to show $\sigma \in \FP(W)$.  Because of the target, we must choose $r<0$ in the construction of $W'_\sigma$; in particular, choose $r$ so that $-\frac12 < r < 0$.  Since $G|_\sigma$ has at least one outgoing edge, there exists a $k \notin \sigma$ and a $j \in \sigma$ such that $j \to k$ in $G$.  Set $W_{kj} > -1 - 2r$ and for all other $i \in \sigma$, $i \neq j$, set $W_{ki} > -1 +\frac{r}{|\sigma|-1}$ in a way that is compatible with $G$.  Fill in the remaining entries of $W'$ in any way consistent with the graph $G$.  Since $\sum_{i\in\sigma} W'_{ki} > - |\sigma| - r$ for at least one $k$, we have $\sigma \notin \FP(W')$.
\end{proof}

From part 3b of Proposition~\ref{prop:param-dep-survival}, we see that the only case when we can guarantee $\sigma \in \FP(W)$ for every choice of $W$ compatible with $G$ is when $G|_\sigma$ is an invariant permitted motif (so $\sigma \in \FP(W_\sigma)$ for all $W$) that contains an internal target and has no outgoing edges in $G$.  Since the singleton and the $2$-clique are the only invariant permitted motifs that contain a target (see Theorem~\ref{thm:flexible-motifs}), we immediately obtain the following corollary, which further elucidates why all the graphs in DAG1 contain a singleton with no outgoing edges, which is the support of the unique fixed point of the network, while all the graphs in DAG2 contain a $2$-clique with no outgoing edges, and again this $2$-clique is the only fixed point support of the network.  

\begin{corollary}
Let $G$ be a graph on $n$ nodes and $\sigma \subseteq [n]$.  If $\sigma \in \FP(W)$ for every $W$ with graph $G_W=G$, then $G|_\sigma$ is either a singleton or a $2$-clique and there are no outgoing edges from $G|_\sigma$ to the rest of $G$.  
\end{corollary}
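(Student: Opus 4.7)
The plan is to leverage the classification of invariant permitted motifs in Theorem~\ref{thm:flexible-motifs}(3) to reduce to just four candidate subgraphs for $G|_\sigma$, and then apply Proposition~\ref{prop:param-dep-survival} together with Corollary~\ref{cor:singleton} to rule out the unwanted candidates and force the ``no outgoing edges'' condition.

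\smallskip

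First I would observe that the hypothesis $\sigma \in \FP(W)$ for every $W$ with $G_W = G$ implies, via the first fixed point condition of Section~\ref{sec:background} (or equivalently Lemma~\ref{lemma:inheritance} with $\tau = \sigma$), that $\sigma \in \FP(W_\sigma)$ for every such $W$. Therefore $G|_\sigma$ is an invariant permitted motif, and by Theorem~\ref{thm:flexible-motifs}(3) it must be one of: the singleton, the $2$-clique, the independent set of size $2$, or the $3$-cycle.

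\smallskip

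Next I would rule out the independent set of size $2$ and the $3$-cycle (the interesting case is $\sigma \subsetneq [n]$, where Proposition~\ref{prop:param-dep-survival} applies). The $3$-cycle has neither sources nor targets in $G|_\sigma$, since every node has exactly one internal in-edge and one internal out-edge, so part~1 of Proposition~\ref{prop:param-dep-survival} yields parameter-dependent survival, contradicting the hypothesis. The independent set of size~$2$ has both nodes as sources and neither as an internal target, so case~2 of Proposition~\ref{prop:param-dep-survival} applies: if $G$ contains an external target of $G|_\sigma$, then $\sigma \notin \FP(W)$ for any compatible $W$ (part~2b), and otherwise $\sigma$ has parameter-dependent survival (part~2a); either subcase violates the hypothesis.

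\smallskip

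Finally, for the surviving candidates (the singleton and the $2$-clique) I would establish that $G|_\sigma$ has no outgoing edges in $G$. The $2$-clique has an internal target and no source, so we are in case~3 of Proposition~\ref{prop:param-dep-survival}: if any outgoing edge from $\sigma$ existed, part~3a would give parameter-dependent survival, again contradicting the hypothesis. The singleton $\sigma = \{i\}$ is an edge case because the lone node is vacuously both a source and an internal target of $G|_\sigma$, so Proposition~\ref{prop:param-dep-survival} does not apply; here I would invoke Corollary~\ref{cor:singleton}, which asserts that $\{i\} \in \FP(W)$ if and only if $i$ is a sink in $G$, forcing $i$ to have no outgoing edges. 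The main obstacle is not technical depth but the case analysis of source/target structure for each of the four candidate subgraphs, and in particular remembering that the singleton falls outside the hypothesis of Proposition~\ref{prop:param-dep-survival} and must be handled separately via Corollary~\ref{cor:singleton}.
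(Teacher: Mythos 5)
Your proposal is correct and follows essentially the same route as the paper: both reduce to the classification of invariant permitted motifs from Theorem~\ref{thm:flexible-motifs} and then use Proposition~\ref{prop:param-dep-survival} to eliminate the size-$2$ independent set and the $3$-cycle and to force the no-outgoing-edge condition. Your explicit treatment of the singleton via Corollary~\ref{cor:singleton} (since it falls outside the hypothesis of Proposition~\ref{prop:param-dep-survival}) is a slightly more careful packaging of the same argument, not a different approach.
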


\end{document}